\keywords{weighted pushdown automata, algebraic series,
	weighted contextfree grammar, formal power series,
	complete semiring}
\newcommand*{\Sr}{S}
\newcommand*{\Cmc}{\mathcal{C}}
\newcommand*{\Pmc}{\mathcal{P}}
\newcommand*{\Snn}{\Sr^{n \times n}}
\newcommand*{\Spnn}{{\Sr'}^{n\times n}} 
\newcommand*{\Snngam}{(\Snn)^{\Gamma^* \times \Gamma^*}}
\newcommand*{\Mpp}{M_{p,p}}
\newcommand*{\Mppsq}{M_{p,p^2}}
\newcommand*{\Mpe}{M_{p,\epsilon}}
\newcommand*{\Mspe}{(M^*)_{p,\epsilon}}
\newcommand*{\Momk}{(M^{\omega,k})}
\newcommand*{\aim}{a_{im}}
\newcommand*{\aimi}{a_{im_1}}
\newcommand*{\bij}{b_{ij}}
\newcommand*{\cim}{c_{im}}
\newcommand*{\mpm}{[m_1,p,m_2]}
\newcommand*{\mpj}{[m,p,j]}
\newcommand*{\ipj}{[i,p,j]}
\newcommand*{\ip}{[i,p]}
\renewcommand{\mp}{[m,p]}
\newcommand*{\llss}{\ll \Sigma^* \gg}
\newcommand*{\llso}{\ll \Sigma^\omega \gg}
\newcommand*{\Nlse}{\N^\infty \langle \Sigma \cup \{\epsilon\} \rangle}
\newcommand*{\derl}{\Rightarrow_{\!L}}
\newcommand*{\derls}{\Rightarrow_{\!L}^*}
\newcommand*{\derlom}{\Rightarrow_{\!L}^{\omega,k}}
\renewcommand{\epsilon}{\varepsilon}
\newcommand{\N}{\mathbb{N}}
\newcommand{\B}{\mathbb{B}}
\newcommand*{\roc}{(\Sr^{n \times n})^{p^* \times p^*}}
\newcommand*{\bhvr}[1]{\Vert #1 \Vert}
\newcommand*{\rom}[1]{\expandafter\@slowromancap\romannumeral #1@}
\begin{document}

\title{Weighted $\omega$-Restricted One-Counter Automata}
%\title{Weighted $\omega$-Restricted One-Counter Automata\thanks{This work was partially supported by DFG Graduiertenkolleg 1763 (QuantLA). The second author was partially supported by Austrian Science Fund (FWF): grant no. I1661 – N25.}}

%\author{Manfred Droste\inst{1} \and Werner Kuich\inst{2}}
%\institute{Universit\"at Leipzig,  Institut f\"ur Informatik, \\
%	\email{droste@informatik.uni-leipzig.de} \and 
%	Technische Universit\"at Wien, 
%	Institut f\"ur Diskrete Mathematik und Geometrie\\
%	\email{kuich@tuwien.ac.at}}
\author[M. Droste]{Manfred Droste}
\address{Universit\"at Leipzig,  Institut f\"ur Informatik}
\email{droste@informatik.uni-leipzig.de}
\thanks{This work was partially supported by DFG Graduiertenkolleg 1763 (QuantLA)}

\author[W. Kuich]{Werner Kuich}	%optional
\address{Technische Universit\"at Wien, Institut f\"ur Diskrete Mathematik und Geometrie}
\email{kuich@tuwien.ac.at}  %optional
\thanks{The second author was partially supported by Austrian Science Fund (FWF): grant no. I1661 – N25.}

%% etc.

%% required for running head on odd and even pages, use suitable
%% abbreviations in case of long titles and many authors:

%%%%%%%%%%%%%%%%%%%%%%%%%%%%%%%%%%%%%%%%%%%%%%%%%%%%%%%%%%%%%%%%%%%%%%%%%%%

%% the abstract has to PRECEDE the command \maketitle:
%% be sure not to issue the \maketitle command twice!

\begin{abstract}
  \noindent Let $\Sr$ be a complete star-omega semiring and $\Sigma$ be an alphabet. 
  For a weighted $\omega$-restricted one-counter automaton $\Cmc$
  with set of states $\{1, \dots, n\}$, $n \geq 1$, we show that
  there exists a mixed algebraic system over a complete semiring-semimodule pair
  ${((\Sr \ll \Sigma^* \gg)^{n\times n},  (\Sr \ll \Sigma^{\omega}\gg)^n)}$ such
  that the behavior $\bhvr{\Cmc}$ of $\Cmc$ is a component of a solution of this system.
  In case the basic semiring is $\B$ or $\N^{\infty}$  we show that there exists
  a mixed context-free grammar that generates $\bhvr{\Cmc}$.
  The construction of the mixed context-free grammar from $\Cmc$ is a generalization of the well-known triple construction in case of restricted one-counter automata and is called now triple-pair construction for $\omega$-restricted one-counter automata.
\end{abstract}

\maketitle

\section{Introduction}
\label{section:Intro}

Restricted one-counter pushdown automata and languages were introduced by Greibach 
\cite{JACM16}
and considered in Berstel \cite{4}, Chapter \rom{7} 4. 
These restricted one-counter pushdown automata are pushdown automata having just one
pushdown symbol accepting by empty tape,
and the family of restricted one-counter languages is the family of languages
accepted by them.

Let L be the Lukasiewicz language, i.e., the formal language over the alphabet 
$\Sigma=\{a,b\}$ 
generated by the context-free grammar with productions $S \to aSS, S\to b$.
Then the family of restricted one-counter languages is the principal cone generated by L, while the family of one-counter languages is the full AFL generated by L.

All these results can be transferred to formal power series and restricted one-counter automata over them (see Kuich, Salomaa \cite{88}, Example 11.5).
Restricted one-counter automata can also be used to accept infinite words and it is this aspect we generalize in our paper.

We consider weighted $\omega$-restricted one-counter automata 
and their relation to algebraic systems over the complete 
semiring-semimodule pair $(\Sr ^{n\times n}, V^n)$,
where $\Sr$ is a complete star-omega semiring. 
It turns out that the well-known triple construction for pushdown
automata in case of unweighted restricted one-counter
automata can be generalized to a triple-pair construction for weighted 
$\omega$-restricted one-counter automata. In the classical theory, the triple construction yields for a given pushdown automaton an equivalent context-free grammar.
(See Harrison \cite{63}, Theorem 5.4.3; 
Bucher, Maurer \cite{17}, S\"atze 2.3.10, 2.3.30; Kuich, Salomaa \cite{88}, pages 178, 306; Kuich \cite{78}, page 642; 
\'Esik, Kuich \cite{MAT}, pages 77, 78.)

The paper consists of this and three more sections.
In Section \ref{sec:pre}, we review the necessary preliminaries.
In Section \ref{sec:2}, restricted one-counter matrices are introduced and their properties are studied. 
The main result is that, for such a matrix $M$, the $p$-block of the infinite column vector $M^{\omega,k}$ is a solution of the linear equation 
$z=(M_{p, p^2}(M^*)_{p,\epsilon} + M_{p,p}+ M_{p,p^2})z$.
In Section \ref{sec:3}, weighted $\omega$-restricted one-counter automata are introduced as a special case of weighted $\omega$-pushdown automata. 
We show that for a weighted $\omega$-restricted one-counter automaton $\Cmc$ there exists a mixed algebraic system such that the behavior $\bhvr{\Cmc}$ of $\Cmc$ is a component of a solution of this system.
In Section \ref{sec:4} we consider the case that the complete star-omega semiring $\Sr$ is equal to $\B$ or $\N^\infty $. 
Then for a given  weighted  $\omega$-restricted one-counter automaton $\Cmc$ a mixed context-free grammar is constructed that generates $\bhvr{\Cmc}$.
This construction is a generalization of the well-known triple construction in case of 
restricted one-counter automata and is called \emph{triple-pair construction} for $\omega$-restricted one-counter automata.

\section{Preliminaries}\label{sec:pre}

For the convenience of the reader, we quote definitions and results of \'Esik, Kuich \cite{42,43,45} from \'Esik, Kuich \cite{MAT}. The reader should be familiar with Sections 5.1-5.6 of \'Esik, Kuich \cite{MAT}.

A semiring $S$ is called \emph{complete} if it is possible to define sums for all
families $(a_i \mid i \in I)$ of elements of $S$, where $I$ is an arbitrary index set, such that the following conditions are satisfied 
(see Conway \cite{25}, Eilenberg \cite{29}, Kuich \cite{78}):
\begin{align*}
\text{(i)} \quad & \sum\limits_{i \in \emptyset} a_i = 0, \qquad 
\sum\limits_{i \in \{j\}} a_i = a_j, 
\qquad \sum\limits_{i \in \{j,k\}} a_i = a_j + a_k \text{ for } j \neq k \, , \\
\text{(ii)} \quad & \sum\limits_{j \in J}\big(\sum_{i \in I_j} a_i \big) = \sum_{i \in I} a_i \, , 
\text{ if }\ \bigcup_{j \in J}\! I_j = I \ \text{ and }\ I_j \cap I_{j'} = \emptyset \ 
\text{ for } \ j \neq j' \, ,\\
\text{(iii)} \quad & \sum_{i \in I} (c \cdot a_i) = c \cdot \big( \sum_{i \in I} a_i \big) ,
\qquad \sum_{i \in I} (a_i \cdot c) = \big( \sum_{i \in I } a_i \big) \cdot c \, .
\end{align*}

This means that a semiring $S$ is complete if it is possible to define ``infinite sums'' (i) that are an extension of the finite sums, (ii) that are associative and commutative and 
(iii) that satisfy the distribution laws.
If $S$ is a monoid and conditions (i) and (ii) are satisfied then $S$ is called a \emph{complete monoid}.

A semiring S equipped with an additional unary star operation $^*: \Sr \to \Sr$ is called a starsemiring.
In complete semirings for each element $a$, the \emph{star} $a^*$
of $a$ is defined by
\begin{equation*}
a^* = \sum_{j\geq 0} a^j \, .
\end{equation*}
Hence, each complete semiring is a starsemiring, called a \emph{complete starsemiring}.\par

Suppose that $\Sr$ is a semiring and $V$ is a commutative monoid written additively.
We call $V$ a (left) $\Sr$-semimodule if $V$ is 
equipped with a (left) action

\begin{align*}
\Sr \times V & \ \to\ V\\
(s,v) & \ \mapsto \  sv
\end{align*}
\noindent
subject to the following rules:

\begin{alignat*}{3}
s(s'v) & = (ss')v & 1v & = v\\
(s+s')v & = sv + s'v &  \hspace{2cm} 0v  & = 0 \\
s(v+v') & = sv + sv' & s0 & = 0,
\end{alignat*}

%\begin{figure*}
%
%\begin{minipage}{\textwidth}
%	\twocolumn
%\begin{align*}
%s(s'v) & = (ss')v\\
%(s+s')v & = sv + s'v \\
%s(v+v') & = sv + sv' 
%\end{align*}
%
%\begin{align*}
% lv & = v\\
% 0v  & = 0 \\
%s0 & = 0,
%\end{align*}
%\end{minipage}
%
%\end{figure*}
%\onecolumn
\noindent
for all $s,s' \in \Sr$ and $v,v' \in V$.
When V is an $\Sr$-semimodule, we call $(\Sr,V)$ a 
\emph{semiring-semimodule pair}.

Suppose that $(\Sr,V)$ is a semiring-semimodule pair 
such that $\Sr$ is a starsemiring and $\Sr$ and $V$ are equipped with an omega operation
$^\omega: \Sr \to V$.
Then we call $(\Sr,V)$ a \emph{starsemiring-omegasemimodule pair}.
%Following Bloom, \'Esik \cite{10}, we call a starsemiring-omegasemimodule pair $(\Sr,V)$ a 
%\emph{Conway semiring-semimodule pair} if $\Sr$ is a 
%Conway semiring and if the omega operation satisfies
%the \emph{sum omega identity} and the \emph{product omega identity}:

%\begin{align*}
%(a+b)^\omega & = (a^*b)^\omega + (a^*b)^*a^\omega\\
%(ab)^\omega & = a(ba)^\omega,
%\end{align*}
%\noindent
%for all $a,b \in \Sr$. It then follows that the 
%\emph{omega fixed-point equation} holds, i.e.
%\begin{equation*}
%a a^\omega = a^\omega,
%\end{equation*}
%for all $a \in \Sr$.

\'Esik, Kuich \cite{44} define a 
\emph{complete semiring-semimodule pair} to be a semiring-semimodule pair $(\Sr,V)$ such that $\Sr$ is a complete semiring and V is a complete monoid with
\begin{align*}
s\big(\sum_{i\in I} v_i \big) & = \sum_{i \in I} sv_i\\
\big(\sum_{i\in I} s_i\big) v & = \sum_{i \in I} s_iv \, ,
\end{align*}
\noindent
for all $s\in \Sr$, $v \in V$, and for all families 
$(s_i)_{i \in I} $ over $\Sr$ and $(v_i)_{i \in I} $ over $V$;
moreover, it is required that an \emph{infinite product operation}
\begin{equation*}
(s_1, s_2, \ldots) \ \mapsto \ \prod_{j\geq 1} s_j
\end{equation*}
is given mapping infinite sequences over $\Sr$ to $V$ subject to the following three conditions:
\begin{align*}
\prod_{i\geq 1} s_i & \ = \ \prod_{i \geq 1} (s_{n_{i-1}+1}\cdot \dots \cdot s_{n_i})\\
s_1 \cdot \prod_{i \geq 1} s_{i+1} & \ = \ \prod_{i\geq 1} s_i\\
\prod_{j\geq 1} \sum_{i_j \in I_j} s_{i_j} & \ = \ 
\sum_{(i_1, i_2, \dots)\in I_1 \times I_2 \times \dots} \prod_{j\geq 1} s_{i_j} \, ,
\end{align*}

\noindent
where in the first equation $0=n_0 \leq n_1 \leq n_2 \leq \dots$ and $I_1, I_2, \dots$ are arbitrary index sets.
Suppose that $(\Sr, V)$ is complete.
Then we define
\begin{align*}
s^* & \ = \ \sum_{i\geq 0} s^i\\
s^\omega &\  = \ \prod_{i\geq 1} s \, ,
\end{align*}
\noindent
for all $s \in \Sr$. This turns $(\Sr, V)$ into a starsemiring-omegasemimodule pair.
%By \'Esik, Kuich \cite{44}, each complete semiring-semimodule pair
%is a Conway semiring-semimodule pair. 
Observe that, if $(\Sr,V)$ is a complete semiring-semimodule pair,
then $0^\omega = 0$.\par
%Let $\Sr$ be a starsemiring. By $\Snn$ we denote the semiring of $n\times n$-matrices over $\Sr$. 
For a starsemiring $\Sr$, we denote by $\Snn$ the semiring of $n\times n$-matrices over $\Sr$. 
If $(\Sr, V)$ is a complete semiring-semimodule pair then, by \'Esik, Kuich \cite{EK}, $(\Sr^{n \times n}, V^n)$ is again a complete semiring-semimodule pair.

A \emph{star-omega semiring} is a semiring $\Sr$ equipped with unary operations $^*$ and $^\omega : \Sr \to \Sr$.
A star-omega semiring $\Sr$ is called \emph{complete} if $(\Sr,\Sr)$ is a complete semiring semimodule pair, i.e., if $\Sr$ is complete and is equipped with an infinite product operation that satisfies the  three conditions stated above.
 For the theory of infinite words and finite automata accepting infinite words by the B\"uchi condition consult Perrin, Pin \cite{PerPin}.
 
\section{Restricted one-counter matrices}
\label{sec:2}

In this section we introduce restricted one-counter (roc) matrices. %and study their properties.
%Our first theorem and its corollary generalize Theorem 10.5 of 
%Kuich, Salomaa \cite{88} in case the pushdown transition matrix $M$ is a restricted one-counter matrix.
%Then we show that, for a roc matrix $M$, 
%$(M^\omega)_p$ and  $\Momk_p$, $0 \leq k \leq n$, introduced below satisfy the
%same specific linear equation.
Restricted one-counter matrices are a special case of pushdown matrices introduced by Kuich, Salomaa \cite{88}.
A matrix $M \in \Snngam$ is termed
a \emph{pushdown transition matrix} (with \emph{pushdown alphabet} $\Gamma$ and \emph{set of states} $\{1,\dots,n\}$) if
%\begin{itemize}
\begin{enumerate}[label=(\roman*)]
        \item for each $p \in \Gamma$ there exist only finitely many blocks $M_{p,\pi}$, $\pi \in \Gamma^*$, that are non-zero;
        \item for all $\pi_1, \pi_2 \in \Gamma^*$,
                \begin{equation*}
                M_{\pi_1,\pi_2} = \left\{
                        \begin{array}{ll}
                        M_{p,\pi} & \hspace{0,2cm} \text{if there exist } p \in \Gamma, \pi,\pi' \in \Gamma^* \text{ with } \pi_1 = p\pi' \text{ and } \pi_2 = \pi \pi', \\
                        0 & \hspace{0,2cm} \text{otherwise.}
                        \end{array}
                        \right.
                \end{equation*}
%\end{itemize}
\end{enumerate}
Theorem 10.5 of Kuich, Salomaa \cite{88} states that for pushdown matrices over power series semirings with particular properties, $(M^*)_{\pi_1 \pi_2, \varepsilon} = (M^*)_{\pi_1, \varepsilon} (M^*)_{\pi_2, \varepsilon}$ holds for all $\pi_1, \pi_2 \in \Gamma^*$. This result is generalized in the case of roc-matrices to arbitrary roc-matrices over complete starsemirings in Corollary \ref{cor:2.2}. Then we prove some important equalities for roc-matrices.
In Theorem \ref{thm:2.1} and Corollary \ref{cor:2.2}, $\Sr$ denotes a complete starsemiring; afterwards in this section, $(\Sr, V)$ denotes a complete semiring-semimodule pair.

A \emph{restricted one-counter} (abbreviated \emph{roc}) \emph{matrix (with counter symbol $p$)} is a matrix $M$
in $(\Snn)^{p^*\times p^*}$, for some $n\geq 1$, subject to the following condition:
There exist matrices $A,B,C \in \Snn$ such that,
for all $k \geq 1$, 
\begin{equation*}
M_{p^k, p^{k+1}}=A \, , \quad M_{p^k,p^k}=C \, \quad  M_{p^k,p^{k-1}}=B \, ,
\end{equation*}
and these blocks of $M$ are the only ones which
may be non-zero. (Here, $p^* = \{p^n \mid n \geq 0 \}$. A block of $M$ is an element of the matrix $M$ which is itself a matrix in $S^{n \times n}$.)

Observe that, for $k \geq 1$,
\begin{alignat*}{4}
M_{p^k,p^{k+1}}  & \ = \ M_{p, p^2} & & \ = \ A \, ,\\
M_{p^k,p^k} & \ = \  M_{p,p} & & \ = \ C \, , \\
M_{p^k,p^{k-1}}  & \ = \ M_{p, \epsilon} & &\ = \ B \, , \\
M_{\epsilon,p^k} & \ = \ M_{\epsilon, \epsilon} & & \ = \ 0 \, .
\end{alignat*}

Also note that the matrix $A$ (resp $B,C$) in $\Snn$ describes the weight of
transitions when pushing (resp., popping, not changing) an additional symbol $p$ to 
(resp., from) the pushdown counter.

\begin{thm}\label{thm:2.1}
	Let S be a complete starsemiring and $M$ be a roc-matrix.
	Then, for all $i\geq 0$, 
	\begin{equation*}
	(M^*)_{p^{i+1},\epsilon} \ = \ (M^*)_{p,\epsilon}(M^*)_{p^i,\epsilon} \, .
	\end{equation*}
\end{thm}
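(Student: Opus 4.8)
The plan is to read the block $(M^*)_{p^i,\epsilon}$, which I abbreviate $x_i\in S^{n\times n}$, as a completely summable sum over finite ``computations'' $p^i=\pi_0,\pi_1,\dots,\pi_j=\epsilon$ of powers of $p$, each weighted by the product $M_{\pi_0,\pi_1}\cdots M_{\pi_{j-1},\pi_j}$; the assertion to be proved is then $x_{i+1}=x_1x_i$. I would split every computation from $p^{i+1}$ to $\epsilon$ at the first moment the counter value reaches $i$. The feature of roc-matrices that makes this split clean is that $M_{\epsilon,p^k}=0$ for every $k$: once the counter is empty no non-zero transition is available, so $(M^*)_{\epsilon,\epsilon}=I$ (which already gives the case $i=0$), and, crucially, every computation from $p$ to $\epsilon$ of non-zero weight keeps its counter value $\ge 1$ until its last step.

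In detail: in a computation of non-zero weight consecutive exponents differ by at most $1$, since all other blocks of $M$ vanish, so the exponents trace a lattice path from $i+1$ down to $0$ and there is a well-defined first index $t$ with $|\pi_t|=i$; up to that point the exponents stay $\ge i+1$. Using the displayed equalities $M_{p^a,p^b}=M_{p^{a-i},p^{b-i}}$ valid for $a,b\ge i$, the prefix $\pi_0,\dots,\pi_t$ shifted down by $p^i$ becomes a computation from $p$ to $\epsilon$ with positive intermediate counter values, and the suffix $\pi_t,\dots,\pi_j$ is a computation from $p^i$ to $\epsilon$; this decomposition is weight-preserving. Conversely, given a computation $Q_1$ from $p$ to $\epsilon$ with positive intermediate counter values and a computation $Q_2$ from $p^i$ to $\epsilon$, shifting $Q_1$ up by $p^i$ and concatenating with $Q_2$ yields a computation from $p^{i+1}$ to $\epsilon$ whose first return to counter value $i$ is exactly the junction, so splitting and gluing are mutually inverse. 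By the remark above, the computations from $p$ to $\epsilon$ with positive intermediate values are exactly the weight-carrying ones among all computations from $p$ to $\epsilon$; hence the prefixes range, up to terms of weight $0$, over precisely the family summed by $x_1$, and the suffixes over that summed by $x_i$. Re-indexing the whole sum by the pair (prefix, suffix) and invoking the associativity, commutativity and distributivity laws for infinite sums in the complete starsemiring $S$ then yields $x_{i+1}=x_1x_i$.

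I expect the main obstacle to be the bookkeeping rather than any single inequality: one must check that the first-return index is always defined (clear for $i=0$, and for $i\ge 1$ because the path descends from $i+1$ to $0$ in unit steps), that the down-shift of the prefix is legitimate and weight-preserving --- this is exactly where the uniformity built into the definition of a roc-matrix, together with $M_{\epsilon,\cdot}=0$, is needed --- that gluing really inverts splitting, and, the genuinely delicate point, that the rearrangement of the (generally infinite) family of weights is justified by conditions (ii) and (iii) for complete semirings. The case $i=0$ degenerates harmlessly, the suffix being the empty computation and the identity reading $x_1=x_1\cdot I$ with $x_0=I$. The statement is the roc-matrix specialization of the pushdown-matrix identity $(M^*)_{\pi_1\pi_2,\epsilon}=(M^*)_{\pi_1,\epsilon}(M^*)_{\pi_2,\epsilon}$ mentioned above; that the counter alphabet is a singleton is what allows the argument to be run this directly.
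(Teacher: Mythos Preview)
Your proposal is correct and follows the same approach as the paper: both expand $(M^*)_{p^{i+1},\epsilon}$ as a sum over paths in $p^*$, split each path at its first visit to height $i$, and shift the prefix down by $i$ using the roc-matrix uniformity $M_{p^{i+j},p^{i+j'}}=M_{p^{j},p^{j'}}$. The paper's version is terser and carries the zero-weight terms along formally rather than restricting attention to non-zero paths, but the underlying decomposition and its justification via the complete-semiring axioms are identical.
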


\begin{proof}
	First observe that
	\begin{equation*}
	(M^*)_{p^{i+1}\!,\epsilon} = \! \! \sum_{m\geq 0} (M^{m+1})_{p^{i+1}\!,\epsilon} = \! \!
	\sum_{m\geq 0} \hspace{-0.3cm} \sum_{\hspace{0.3cm} i_1,\dots, i_m \geq 1} \! \! \! \! \! M_{p^{i\text{+}1}\!,p^{i_1}} 
	M_{p^{i_1} \!,p^{i_2}} \dots M_{p^{i_{m\text{-}1}}\! ,p^{i_m}} M_{p^{i_m}\!,\epsilon} \, ,
	\end{equation*}
	where, for $m=0$, the product equals $M_{p^{i+1},\epsilon}$.
	Now we obtain
	
	\begin{align*}
	(M^*)_{p^{i+1},\epsilon}  =\ & 
	\sum_{m\geq 0}   \sum_{ i_1, \dots, i_m \geq 1}M_{p^{i+1},p^{i_1}} \dots
	M_{p^{i_{m-1}},p^{i_m}} M_{p^{i_m},\epsilon}\\
	= \ &  \sum_{m_1\geq 0} \big( \sum_{ j_1, \dots, j_{m_1} \geq 1}  M_{p^{i+1},p^{i+j_1}} \dots
	M_{p^{i+j_{m_1-1}},p^{i+j_{m_1}}} M_{p^{i+j_{m_1}},p^i}\big) \cdot\\
	& \sum_{m_2\geq 0}  \big( \sum_{ i_1, \dots, i_{m_2} \geq 1} 
	M_{p^{i},p^{i_1}} \dots
	M_{p^{i_{m_2-1}},p^{i_{m_2}}} M_{p^{i_{m_2}},\epsilon} \big) \\
	= \ & \sum_{m_1\geq 0} \big( \sum_{ j_1, \dots ,j_{m_1} \geq 1}  
	M_{p,p^{j_1}} \dots
	M_{p^{j_{m_1-1}},p^{j_{m_1}}} M_{p^{j_{m_1}},\epsilon}\big) 
	(M^*)_{p^i,\epsilon}\\
	= \ & (M^*)_{p,\epsilon} (M^*)_{p^i,\epsilon} \, .
	\end{align*}
	
	Clearly, in each sequence leading from $p^{i + 1}$ to $\varepsilon$, there is a first time at which the top $p$ is reduced to
	$\varepsilon$ and at which $p^i$ is seen. This moment is reached at the end of the second line.
	Hence, in the second line the pushdown contents 
	$p^{i+j_1}, \dots, p^{i+j_{m_1}}$, $m_1 \geq 0$ are always nonempty.
	
\end{proof}

\begin{cor}\label{cor:2.2}
	For all $i\geq 0$, 
	$\ (M^*)_{p^i,\epsilon} \ = \ ((M^*)_{p,\epsilon})^i $.
\end{cor}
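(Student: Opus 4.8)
The plan is to prove the corollary by induction on $i$, with Theorem \ref{thm:2.1} doing all the real work in the inductive step.

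For the base case $i = 0$, I would check that $(M^*)_{p^0,\epsilon} = (M^*)_{\epsilon,\epsilon}$ equals the identity matrix of $\Snn$, which is $((M^*)_{p,\epsilon})^0$ under the usual convention for the empty product. This uses the observation recorded just before Theorem \ref{thm:2.1} that $M_{\epsilon,p^k} = M_{\epsilon,\epsilon} = 0$: expanding $(M^*)_{\epsilon,\epsilon} = \sum_{m \geq 0} (M^m)_{\epsilon,\epsilon}$, the term $m = 0$ contributes the identity matrix, while for each $m \geq 1$ the term $(M^m)_{\epsilon,\epsilon}$ is a sum of products whose leftmost factor is a block $M_{\epsilon,\pi}$ and hence vanishes. (Alternatively one could start the induction at $i = 1$, where the statement is the tautology $(M^*)_{p,\epsilon} = ((M^*)_{p,\epsilon})^1$, but then the $i=0$ case still has to be remarked on separately.)

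For the inductive step, assume $(M^*)_{p^i,\epsilon} = ((M^*)_{p,\epsilon})^i$. Applying Theorem \ref{thm:2.1} gives $(M^*)_{p^{i+1},\epsilon} = (M^*)_{p,\epsilon}(M^*)_{p^i,\epsilon}$, and substituting the induction hypothesis yields $(M^*)_{p,\epsilon}\cdot((M^*)_{p,\epsilon})^i = ((M^*)_{p,\epsilon})^{i+1}$, which closes the induction.

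I do not expect any genuine obstacle: the entire content is packaged in Theorem \ref{thm:2.1}, and the only point that requires a moment's attention is the base case, where one must remember that the $p^0 = \epsilon$ block of $M^*$ is the identity matrix of $\Snn$ rather than $0$ (this is exactly where the condition $M_{\epsilon,\pi} = 0$ is needed).
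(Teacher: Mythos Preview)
Your proposal is correct and is exactly the argument the paper has in mind: the corollary is stated without proof immediately after Theorem~\ref{thm:2.1}, and the intended derivation is precisely the induction on $i$ you describe. Your verification of the base case $i=0$ via $(M^*)_{\epsilon,\epsilon}=\sum_{m\geq 0}(M^m)_{\epsilon,\epsilon}$ with the $m=0$ term contributing the identity and all higher terms vanishing because $M_{\epsilon,\pi}=0$ is the right justification, and the inductive step is immediate from Theorem~\ref{thm:2.1}.
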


\begin{lem}\label{lem:2.3}
	Let $(\Sr, V)$ be a complete semiring-semimodule pair.
	Let $M \in \roc$ be a roc-matrix.
	Then 
	$$(M^\omega)_{p^2}\ = \ (M^\omega)_p + (M^*)_{p,\epsilon}(M^\omega)_p \, .$$
\end{lem}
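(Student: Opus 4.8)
The plan is to unfold $(M^\omega)_{p^2}$ via the definition of the omega operation in the complete semiring-semimodule pair $(\roc,(V^n)^{p^*})$, namely
$$(M^\omega)_{p^2}\ =\ \sum_{k_1,k_2,\ldots\,\geq 0} M_{p^2,p^{k_1}}M_{p^{k_1},p^{k_2}}M_{p^{k_2},p^{k_3}}\cdots\,,$$
the sum ranging over all infinite sequences of stack heights and the product being the infinite product operation on sequences over $\Sr^{n\times n}$ with values in $V^n$. First I would discard the degenerate sequences: if $k_i=0$ for some $i$, then the next factor is $M_{\epsilon,p^{k_{i+1}}}=0$ by the defining property of roc-matrices, so by the grouping axiom for infinite products together with $0v=0$ the corresponding term is $0$. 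Hence only sequences with all $k_i\geq 1$ contribute --- informally, the infinite runs of the counter that never empty it.

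I would then group the surviving sequences by $m:=\min\{\,i\mid k_i=1\,\}\in\{1,2,\ldots\}\cup\{\infty\}$. For $m=\infty$ (heights stay $\geq 2$ forever), reindexing $k_i\mapsto k_i-1$ and applying the translation invariance of roc-matrices at heights $\geq 1$ (i.e.\ $M_{p^{a+1},p^{b+1}}=M_{p^a,p^b}$ for $a\geq 1$, recorded just after the definition of a roc-matrix) turns this partial sum into $\sum_{\ell_1,\ell_2,\ldots\,\geq 1}M_{p,p^{\ell_1}}M_{p^{\ell_1},p^{\ell_2}}\cdots$, which by the same vanishing argument equals $(M^\omega)_p$. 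For a fixed finite $m$, the sequence has $k_1,\ldots,k_{m-1}\geq 2$, $k_m=1$ and $k_{m+1},k_{m+2},\ldots\geq 0$ arbitrary; the grouping axiom lets me factor the infinite product as $\bigl(M_{p^2,p^{k_1}}\cdots M_{p^{k_{m-1}},p}\bigr)\cdot\bigl(M_{p,p^{k_{m+1}}}M_{p^{k_{m+1}},p^{k_{m+2}}}\cdots\bigr)$. Summing the tail over $k_{m+1},k_{m+2},\ldots$ gives $(M^\omega)_p$ once more; summing the prefixes over all $m\geq 1$ and all $k_1,\ldots,k_{m-1}\geq 2$ and again translating $k_i\mapsto k_i-1$ produces exactly $\sum_{j\geq 0}\sum_{\ell_1,\ldots,\ell_j\,\geq 1}M_{p,p^{\ell_1}}\cdots M_{p^{\ell_j},\epsilon}$, which is the series for $(M^*)_{p,\epsilon}$ used in the proof of Theorem~\ref{thm:2.1}. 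Adding the $m=\infty$ contribution $(M^\omega)_p$ to the total $m<\infty$ contribution $(M^*)_{p,\epsilon}(M^\omega)_p$ gives the assertion.

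The only real work is justifying the rearrangements. Pulling the finite prefix sums in front of the tail infinite product uses the distributivity laws of the complete semiring-semimodule pair together with the third infinite-product axiom $\prod_{j}\sum_{i_j}s_{i_j}=\sum_{(i_1,i_2,\ldots)}\prod_{j}s_{i_j}$ applied to the tail, while the regrouping by the value of $m$ uses associativity and commutativity of the complete sum in $V^n$. One must keep in mind that the translation $k_i\mapsto k_i-1$ is legitimate only at heights $\geq 1$, where roc-matrices really are translation invariant --- which is precisely why the degenerate sequences are removed at the outset. Everything else is the familiar ``first moment at which the counter is decremented'' decomposition of an infinite run.
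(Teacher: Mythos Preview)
Your proposal is correct and follows essentially the same route as the paper's proof: both decompose the infinite sum defining $(M^\omega)_{p^2}$ according to whether the counter height ever drops to $1$ (your $m=\infty$ versus finite $m$ is exactly the paper's first-summand/second-summand split), then apply the roc shift $M_{p^{a+1},p^{b+1}}=M_{p^a,p^b}$ to identify the pieces with $(M^\omega)_p$ and $(M^*)_{p,\epsilon}(M^\omega)_p$. You are simply more explicit than the paper about discarding the degenerate sequences with some $k_i=0$ and about which axioms of the complete semiring-semimodule pair license the regroupings; the underlying argument is the same.
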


\begin{proof}
	Subsequently in the first equation we split the summation so that in the first summand there is no factor $M_{p^2,p}$, while in the second summand there is at least one factor $M_{p^2,p}$; since 
	$k_1, \dots, k_m \geq 2$, $M_{p^{k_m},p}$ is the first such factor.
	In the second equality we use the property of $M$ being a roc-matrix: $M_{p^i,p^j} = M_{p^{i-1},p^{j-1}}$ for $i \geq 2$,
	$j \geq 1$.	
	We compute:
	
	\begin{align*}
	(M^\omega)_{p^2}  = \ & \sum_{i_1,i_2,\dots \geq 2} M_{p^2,p^{i_1}}
	M_{p^{i_1},p^{i_2}} \dots + \\
	& \sum_{m\geq 0} \sum_{k_1, k_2, \dots, k_m \geq 2} \!\!\!
	M_{p^2,p^{k_1}}M_{p^{k_1},p^{k_2}} \dots M_{p^{k_m},p} 
	\sum_{j_1,j_2,\dots \geq 1} \! \! M_{p,p^{j_1}} M_{p^{j_1},p^{j_2}}\dots\\
	= \ & \sum_{i_1,i_2,\dots \geq 2} M_{p,p^{i_1-1}}
	M_{p^{i_1-1},p^{i_2-1}} \dots + \\
	& \sum_{m\geq 0} \sum_{k_1, k_2, \dots, k_m \geq 2} \!\!\!
	M_{p,p^{k_1-1}}M_{p^{k_1-1},p^{k_2-1}} \dots M_{p^{k_m-1},\epsilon} (M^\omega)_p\\
	& \ = (M^\omega)_p + \sum_{m \geq 0} (M^{m+1})_{p,\epsilon} (M^\omega)_p \\
	& \ = (M^\omega)_p + (M^*)_{p,\epsilon}(M^\omega)_p \, .
	\end{align*}
\end{proof}

Intuitively, our next theorem states that infinite computations starting with $p$ on the pushdown tape yield the same matrix $(M^\omega)_p$ as the sum of the following three matrix products:
\begin{itemize}
\item $M_{p, p^2}(M^\omega)_p$ (i.e., changing the contents of the pushdown tape from $p$ to $pp$ and starting the infinite computations with the leftmost $p$; the second $p$ is never read),
\item $M_{p, p^2} (M^*)_{p, \varepsilon} (M^\omega)_p$ (i.e., changing the contents of the pushdown tape from $p$ to $pp$, emptying the leftmost $p$ by finite computations and starting the infinite computations with the rightmost $p$),
\item $M_{p,p} (M^\omega)_p$ (i.e., changing the contents of the pushdown tape from $p$ to $p$ and starting the infinite computations with this $p$).
\end{itemize}
The forthcoming Theorem \ref{thm:2.7} has an analogous intuitive interpretation.

\begin{thm}
	Let $(\Sr, V)$ be a complete semiring-semimodule pair and let 
	$M \in \roc$ be a roc-matrix. Then
	\begin{equation*}
	(M^\omega)_p \ = \ (M_{p,p^2} + M_{p,p^2}(M^*)_{p,\epsilon} + M_{p,p})(M^\omega)_p \, .
	\end{equation*}
\end{thm}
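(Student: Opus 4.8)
The plan is to unfold a single step of the infinite computation described by $M^\omega$, note that a computation which empties the counter cannot be continued, and then feed in Lemma~\ref{lem:2.3}; this matches exactly the intuitive decomposition stated just above the theorem.

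\emph{Step 1 (one-step unfolding).} Write out the defining infinite product $(M^\omega)_p = \sum_{i_1,i_2,\dots} M_{p,p^{i_1}}M_{p^{i_1},p^{i_2}}\cdots$, group the summation by the first index $i_1$, pull $M_{p,p^{i_1}}$ out of the remaining sum using $s\bigl(\sum_i v_i\bigr)=\sum_i sv_i$, and apply the shift law $s_1\prod_{j\geq 1}s_{j+1}=\prod_{j\geq 1}s_j$ for infinite products so as to recognize the remaining sum as $(M^\omega)_{p^{i_1}}$. (Equivalently: $((\Snn)^{p^*\times p^*},(V^n)^{p^*})$ is again a complete semiring-semimodule pair, so $M^\omega$ is defined and $M^\omega=M\cdot M^\omega$ holds, whose $p$-component is the displayed identity.) Either way one obtains
\[(M^\omega)_p \;=\; \sum_{q\in p^*} M_{p,q}\,(M^\omega)_q .\]

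\emph{Step 2 (pruning the sum).} Since $M$ is a roc-matrix, the only blocks of its $p$-row that may be non-zero are $M_{p,p^2}=A$, $M_{p,p}=C$ and $M_{p,\epsilon}=B$, so the sum collapses to $(M^\omega)_p = M_{p,\epsilon}(M^\omega)_\epsilon + M_{p,p}(M^\omega)_p + M_{p,p^2}(M^\omega)_{p^2}$. The $\epsilon$-row of $M$ is identically $0$ (indeed $M_{\epsilon,p^k}=M_{\epsilon,\epsilon}=0$), hence every term in the defining infinite product of $(M^\omega)_\epsilon$ contains a factor $M_{\epsilon,\cdot}=0$, so $(M^\omega)_\epsilon=0$ and the first summand drops out, leaving $(M^\omega)_p = M_{p,p}(M^\omega)_p + M_{p,p^2}(M^\omega)_{p^2}$. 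Then substitute Lemma~\ref{lem:2.3}, $(M^\omega)_{p^2}=(M^\omega)_p+(M^*)_{p,\epsilon}(M^\omega)_p$, and collect using left-distributivity of the action and distributivity in $\Snn$:
\[(M^\omega)_p = M_{p,p}(M^\omega)_p + M_{p,p^2}(M^\omega)_p + M_{p,p^2}(M^*)_{p,\epsilon}(M^\omega)_p = \bigl(M_{p,p^2} + M_{p,p^2}(M^*)_{p,\epsilon} + M_{p,p}\bigr)(M^\omega)_p .\]

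The only genuinely delicate point is Step~1: making precise that $(\Snn)^{p^*\times p^*}$ together with its companion $(V^n)^{p^*}$ part forms a complete semiring-semimodule pair, so that $M^\omega$ and the one-step identity $M^\omega=MM^\omega$ are legitimate — or, equivalently, carrying out the hands-on reindexing of the defining infinite product via the grouping law, the shift law and distributivity of infinite products over infinite sums. Once that is in place, everything else is routine bookkeeping with Lemma~\ref{lem:2.3}.
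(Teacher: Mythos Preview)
Your proof is correct and is essentially the paper's argument run in the opposite direction: the paper starts from the right-hand side, applies Lemma~\ref{lem:2.3} to collapse to $M_{p,p^2}(M^\omega)_{p^2}+M_{p,p}(M^\omega)_p=(MM^\omega)_p=(M^\omega)_p$, whereas you start from $(M^\omega)_p=(MM^\omega)_p$, expand, and then invoke Lemma~\ref{lem:2.3}. The only difference is that you make explicit the vanishing of the $M_{p,\epsilon}(M^\omega)_\epsilon$ term, which the paper leaves implicit.
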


\begin{proof}
	
	We obtain, by Lemma \ref{lem:2.3}
	\begin{align*}
	& (M_{p,p^2} + M_{p,p^2}(M^*)_{p,\epsilon} + M_{p,p})(M^\omega)_p 	\\
	= \quad & M_{p,p^2}((M^\omega)_p +(M^*)_{p,\epsilon}(M^\omega)_p)
	+ M_{p,p}(M^\omega)_p \\
	= \quad & M_{p,p^2}(M^\omega)_{p^2} + M_{p,p}(M^\omega)_p \ = \ (MM^\omega)_p \ = \ (M^\omega)_p \, .
	\end{align*}
\end{proof}

\begin{cor}
	Let $M \in \roc$ be a roc-matrix.
	Then $(M^\omega)_p$ is a solution of 
	\begin{equation*}
	z = (M_{p,p^2} + M_{p,p^2}(M^*)_{p,\epsilon}+M_{p,p})z \, .
	\end{equation*}
\end{cor}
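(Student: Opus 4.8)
The plan is to observe that this corollary is nothing more than a restatement of the theorem proved immediately above it. By definition, an element $z_0$ is a \emph{solution} of the equation
\[
z \ = \ (M_{p,p^2} + M_{p,p^2}(M^*)_{p,\epsilon}+M_{p,p})\,z
\]
precisely when substituting $z_0$ for $z$ on both sides yields a valid identity. The preceding theorem asserts exactly that
\[
(M^\omega)_p \ = \ (M_{p,p^2} + M_{p,p^2}(M^*)_{p,\epsilon} + M_{p,p})(M^\omega)_p ,
\]
that is, that this identity holds for $z_0 = (M^\omega)_p$. Hence the proof consists of a single line: by that theorem, $(M^\omega)_p$ satisfies the displayed equation and is therefore a solution of it.

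The only point worth making explicit is that all the expressions involved are well defined over the complete semiring-semimodule pair $(\Snn, V^n)$. Since $\Sr$ is complete, $(M^*)_{p,\epsilon}\in\Snn$ exists (in fact $\Snn$ is itself a complete starsemiring), so the coefficient $M_{p,p^2} + M_{p,p^2}(M^*)_{p,\epsilon}+M_{p,p}$ is an element of $\Snn$; the block $(M^\omega)_p$ lies in $V^n$; and the product is taken with respect to the left action of $\Snn$ on $V^n$ coming from $(\Snn,V^n)$ being a complete semiring-semimodule pair. This is purely a definitional matter and presents no difficulty.

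Consequently I do not expect any genuine obstacle in the corollary itself: the substance of the argument resides in Lemma~\ref{lem:2.3} and in the theorem that precedes the corollary, both already established. The corollary simply records their consequence in the fixed-point form $z = (M_{p,p^2} + M_{p,p^2}(M^*)_{p,\epsilon}+M_{p,p})z$, which is the form in which this fact will be invoked later, when the mixed algebraic system associated with a weighted $\omega$-restricted one-counter automaton is set up.
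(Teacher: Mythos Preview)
Your proposal is correct and matches the paper's treatment exactly: the corollary is an immediate restatement of the preceding theorem in fixed-point form, and the paper gives no separate proof for it. Your added remark on well-definedness over $(\Snn,V^n)$ is accurate but not something the paper spells out.
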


When we say ``$G$ is the graph with matrix 
$M \in \roc$'' then it means that $G$ is the graph with 
adjacency matrix 
$M' \in \Sr^{(p^*\times n)\times(p^*\times n)}$, where $M'$ 
corresponds to $M$ with respect to the canonical isomorphism 
between $\roc$ and $\Sr^{(p^*\times n) \times (p^*\times n)}$.

Let now $M$ be a roc-matrix and $0\leq k \leq n$.
Then $M^{\omega,k}$ is the column vector in $(V^n)^{p^*}$ 
defined as follows: 
For $i\geq 1$ and $1 \leq j \leq n$, 
let $((M^{\omega,k})_{p^i})_j$ be the sum of all weights of paths in the graph with 
matrix $M$ that have initial vertex $(p^i,j)$ and visit vertices $(p^{i'},j')$, $i'\in \N$, $j' \in \{1, \ldots, k\}$, infinitely often.  
Observe that $M^{\omega, 0}= 0$ and $M^{\omega, n}=M^\omega$. Later on it will be seen that this formalizes the B\"uchi acceptance condition with repeated states $\{1, \ldots, k\}$.

Let $P_k = \{ (j_1, j_2, \dots) \in \{1, \dots, n\}^\omega \mid j_t \leq k \text{ for infinitely many } t \geq 1 \}$.
Then for $1\leq j \leq n$, we obtain
\begin{equation*}
((M^{\omega,k})_p)_j = 
\sum_{i_1,i_2, \dots \geq 1} \sum_{(j_1, j_2, \dots) \in P_k}
(M_{p,p^{i_1}})_{j,j_1}(M_{p^{i_1},p^{i_2}})_{j_1,j_2}(M_{p^{i_2},p^{i_3}})_{j_2,j_3} \dots \, .
\end{equation*}
%
%Observe that, in the next theorem, the matrix on the right side is a finite $n \times n$-matrix.
%So $(M_{p,p^2} + M_{p,p^2}(M^*)_{p,\epsilon}+M_{p,p})^{\omega,k}$
%is defined in Section 5.2, Definition 5.3 of \'Esik, Kuich
%\cite{MAT}.

By Theorem 5.4.1 of \'Esik, Kuich \cite{MAT}, we obtain for a finite matrix $A \in \Snn$ and for 
$0 \leq k \leq n$, $1 \leq j \leq n$,

\begin{equation*}
(A^{\omega,k})_j = \sum_{(j_1,j_2, \dots) \in P_k} A_{j,j_1} A_{j_1,j_2}A_{j_2,j_3} \dots \, .
\end{equation*}
Observe that again $A^{\omega, 0}= 0$ and $A^{\omega, n}= A^\omega$.

In the next lemma, we use the following summation identity:
Assume that $A_1, A_2, \dots$ are matrices in $\Snn$. Then for $0 \leq k \leq n$, $1 \leq j \leq n$, and $m \geq 1$, 
\begin{align*}
&\sum_{(j_1,j_2, \dots) \in P_k} (A_1)_{j,j_1}(A_2)_{j_1,j_2\dots} =\\
&\sum_{1 \leq j_1, \dots, j_m \leq n} (A_1)_{j, j_1}\dots
(A_m)_{j_{m-1},j_m} \sum_{(j_{m+1},j_{m+2}, \dots)\in P_k}
(A_{m+1})_{j_m,j_{m+1}} \dots \, .
\end{align*}

\begin{lem}\label{lem:2.6}
	Let $(\Sr, V)$ be a complete semiring-semimodule pair. Let $M \in (\Snn)^{\Gamma^*\times \Gamma^*}$
	be a roc-matrix and $0\leq k \leq n$. Then
	\begin{equation*}
	(M^{\omega,k})_{p^2} \ = \ (M^{\omega,k})_p + (M^*)_{p,\epsilon}(M^{\omega,k})_p \, .
	\end{equation*}
	
\end{lem}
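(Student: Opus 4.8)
The plan is to mimic the proof of Lemma~\ref{lem:2.3}, but carrying the B\"uchi constraint $P_k$ along the computation instead of summing over all infinite index sequences. The key observation is the same as before: in an infinite path starting from pushdown content $p^2$, either the top symbol is never popped, or there is a \emph{first} moment at which it is reduced, at which point the content is exactly $p$ again and an infinite path from $p$ continues. Crucially, whether the state sequence lies in $P_k$ depends only on the \emph{tail} of the path, so the finite prefix that empties the top $p$ contributes an ordinary (unconstrained) sum over state indices, and the infinite tail from $p$ contributes a factor $(M^{\omega,k})_p$. This is exactly what the displayed summation identity preceding the lemma is designed to handle.

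Concretely, I would start from the series expansion
\begin{equation*}
((M^{\omega,k})_{p^2})_j \ = \ \sum_{i_1,i_2,\dots \geq 2}\sum_{(j_1,j_2,\dots)\in P_k}
(M_{p^2,p^{i_1}})_{j,j_1}(M_{p^{i_1},p^{i_2}})_{j_1,j_2}\dots
\end{equation*}
and split the outer sum according to whether some $M_{p^{k_t},p}$ factor ever occurs, i.e.\ whether the top symbol is ever popped. In the first summand all intermediate contents stay in $p^{\geq 2}$, so every factor can be rewritten via the roc-identity $M_{p^a,p^b}=M_{p^{a-1},p^{b-1}}$ (for $a\geq 2$, $b\geq 1$) as a computation from $p$, giving $(M^{\omega,k})_p$. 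In the second summand, let $M_{p^{k_m},p}$ be the first popping factor; then $k_1,\dots,k_m\geq 2$ so the same roc-shift applies to the prefix, turning $M_{p^2,p^{k_1}}\cdots M_{p^{k_m},p}$ into $M_{p,p^{k_1-1}}\cdots M_{p^{k_m-1},\epsilon}$, a finite path from $p$ to $\epsilon$; and the remaining infinite tail starts from content $p$. Applying the summation identity to factor the finite prefix off, the prefix part sums (over $m\geq 0$ and the intermediate state indices) to $\sum_{m\geq 0}(M^{m+1})_{p,\epsilon}=(M^*)_{p,\epsilon}$, while the tail is $(M^{\omega,k})_p$. Adding the two summands and passing back from components $((\cdot)_{p^2})_j$ to the vector/matrix level yields $(M^{\omega,k})_{p^2}=(M^{\omega,k})_p+(M^*)_{p,\epsilon}(M^{\omega,k})_p$.

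The main obstacle I anticipate is purely bookkeeping rather than conceptual: one must be careful that the set $P_k$ really only constrains the suffix after the finite popping prefix, so that detaching that prefix is legitimate. This is exactly the content of the summation identity stated just before the lemma (applied with $A_1,\dots,A_m$ the prefix matrices and $A_{m+1},A_{m+2},\dots$ the tail matrices indexed by the still-unknown pushdown contents); the only subtlety is that the number $m$ of prefix steps is not fixed but ranges over all $m\geq 0$, so one applies the identity for each $m$ and then sums. The infinitary distributivity and product axioms of a complete semiring-semimodule pair justify all the rearrangements of these infinite sums and products, just as in Lemma~\ref{lem:2.3}. No new ideas beyond those two ingredients are needed.
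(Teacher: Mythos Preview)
Your proposal is correct and follows essentially the same argument as the paper, which explicitly says ``We use the proof of Lemma~\ref{lem:2.3}'' and then carries out the same split (top $p$ never popped versus first pop), the same roc-shift $M_{p^a,p^b}=M_{p^{a-1},p^{b-1}}$, and the same application of the summation identity to detach the finite prefix from the $P_k$-constrained tail, all componentwise in $j$. One small slip: your displayed series expansion should range over $i_1,i_2,\dots\geq 1$ rather than $\geq 2$, since only afterwards do you split into the ``all $\geq 2$'' case and the ``first time equal to $1$'' case --- but your surrounding prose shows you intend exactly this.
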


\begin{proof}
	We use the proof of Lemma \ref{lem:2.3}, i.e., the proof for the case $M^{\omega,n} = M^\omega$.
	For $1 \leq j \leq n$, we obtain $	((M^{\omega, k})_{p^2})_j = $
	\begin{align*}
	& \sum_{i_1,i_2, \dots \geq 2} \sum_{(j_1,j_2, \dots)\in P_k}(M_{p,p^{i_1-1}})_{j,j_1}(M_{p^{i_1-1},p^{i_2-1}})_{j_1,j_2} \dots + \\
	& \Big( \sum_{1 \leq j' \leq n} \sum_{m\geq 0}\sum_{k_1,k_2,\dots,k_m \geq 2} \sum _{1 \leq j_1, \dots, j_{m}\leq n} \! \!
	(M_{p,p^{k_1-1}})_{j,j_1} \dots (M_{p^{k_m-1}, \epsilon})_{j_m,j'}\Big) \ \cdot\\
	& \Big(\sum_{k_{m+2},k_{m+3}, \dots\geq 1} 
	\sum_{(j_{m+2},j_{m+3}, \dots)\in P_k}  \!\!
	(M_{p,p^{k_{m+2}}})_{j',j_{m+2}}
	(M_{p^{k_{m+2}},p^{k_{m+3}}})_{j_{m+2},j_{m+3}} \dots\Big) =\\
	& (\Momk_p)_j + \sum_{1 \leq j' \leq n} ((M^*)_{p,\epsilon})_{j,j'} (\Momk_p)_{j'} = \\			
	& ((M^{\omega,k})_p)_j + 
	((M^*)_{p,\epsilon}(M^{\omega,k})_{p})_j =\\
	& ((M^{\omega, k})_p + (M^*)_{p,\epsilon}(M^{\omega,k})_p)_j \, .
	\end{align*}
\end{proof}

\begin{thm}\label{thm:2.7}
	Let $(\Sr, V)$ be a complete semiring-semimodule pair and let 
	$M \in \roc$ be a roc-matrix. Then
	\begin{equation*}
	(M^{\omega,k})_p \ = \ (M_{p,p^2} + M_{p,p^2}(M^*)_{p,\epsilon} + M_{p,p})(M^{\omega,k})_p \, ,
	\end{equation*}
	for all $0 \leq k \leq n$.
\end{thm}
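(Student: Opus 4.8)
The plan is to imitate, line for line, the proof of the preceding theorem (the one asserting $(M^\omega)_p = (M_{p,p^2}+M_{p,p^2}(M^*)_{p,\epsilon}+M_{p,p})(M^\omega)_p$), replacing the $\omega$-power $M^\omega$ by $M^{\omega,k}$ and Lemma~\ref{lem:2.3} by Lemma~\ref{lem:2.6}. The only place where that proof uses a property of $M^\omega$ beyond Lemma~\ref{lem:2.3} is the final identity $(MM^\omega)_p = (M^\omega)_p$, so the real work is to supply its restricted counterpart, $(MM^{\omega,k})_p = (M^{\omega,k})_p$; I would establish that first.

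First I would record the boundary value $(M^{\omega,k})_\epsilon = 0$: in the graph with matrix $M$ no edge leaves a vertex $(\epsilon,j)$ because $M_{\epsilon,\pi} = 0$ for every $\pi \in p^*$, so no infinite path starts there. Since $M$ is a roc-matrix this yields
\begin{equation*}
(MM^{\omega,k})_p \;=\; M_{p,\epsilon}(M^{\omega,k})_\epsilon + M_{p,p}(M^{\omega,k})_p + M_{p,p^2}(M^{\omega,k})_{p^2} \;=\; M_{p,p}(M^{\omega,k})_p + M_{p,p^2}(M^{\omega,k})_{p^2}.
\end{equation*}
For the reverse direction, note that for each $i \ge 1$ the component $((M^{\omega,k})_{p^i})_j$ is, by definition, the series $\sum_{i_1,i_2,\dots\ge 1}\sum_{(j_1,j_2,\dots)\in P_k}(M_{p^i,p^{i_1}})_{j,j_1}(M_{p^{i_1},p^{i_2}})_{j_1,j_2}\dots$ (the displayed formula in the text being the case $i=1$). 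In the series for $((M^{\omega,k})_p)_j$ I would factor out the first transition --- by the summation identity stated just before Lemma~\ref{lem:2.6} (applied for each fixed sequence of counter heights), followed by a rearrangement of the sums over those heights using the completeness and infinite-product axioms --- and use that membership in $P_k$ is a tail condition, i.e.\ $(j_1,j_2,\dots)\in P_k$ iff $(j_2,j_3,\dots)\in P_k$, to obtain
\begin{equation*}
((M^{\omega,k})_p)_j \;=\; \sum_{i_1\ge 1}\ \sum_{1\le j_1\le n}(M_{p,p^{i_1}})_{j,j_1}\,((M^{\omega,k})_{p^{i_1}})_{j_1} \;=\; ((MM^{\omega,k})_p)_j ,
\end{equation*}
whence $(M^{\omega,k})_p = (MM^{\omega,k})_p$. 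This rearrangement is the step I expect to be the main obstacle: it is where the axioms of a complete semiring-semimodule pair are actually used, and it is the precise point at which the tail-invariance of $P_k$ enters.

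Granting this identity, the theorem follows by copying the computation of the preceding theorem with Lemma~\ref{lem:2.6} in place of Lemma~\ref{lem:2.3}:
\begin{align*}
(M_{p,p^2} + M_{p,p^2}(M^*)_{p,\epsilon} + M_{p,p})(M^{\omega,k})_p &= M_{p,p^2}\bigl((M^{\omega,k})_p + (M^*)_{p,\epsilon}(M^{\omega,k})_p\bigr) + M_{p,p}(M^{\omega,k})_p\\
&= M_{p,p^2}(M^{\omega,k})_{p^2} + M_{p,p}(M^{\omega,k})_p\\
&= (MM^{\omega,k})_p \;=\; (M^{\omega,k})_p ,
\end{align*}
the second line by Lemma~\ref{lem:2.6} and the third by the two facts just established. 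Since $0 \le k \le n$ was arbitrary this is the assertion; the extreme cases $k=0$ (both sides $0$, as $M^{\omega,0}=0$) and $k=n$ (where $M^{\omega,n}=M^\omega$, recovering the preceding theorem) are included automatically.
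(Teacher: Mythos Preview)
Your proof is correct and follows the paper's own argument line for line: the paper likewise distributes, applies Lemma~\ref{lem:2.6} to collapse $(M^{\omega,k})_p + (M^*)_{p,\epsilon}(M^{\omega,k})_p$ to $(M^{\omega,k})_{p^2}$, and then finishes with $(MM^{\omega,k})_p = (M^{\omega,k})_p$. The only difference is that the paper states this last identity without comment, whereas you take the trouble to justify it from the definition of $(M^{\omega,k})_{p^i}$ and the tail-invariance of $P_k$; that extra care is warranted but does not change the route.
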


\begin{proof}
	
	We obtain, by Lemma \ref{lem:2.6}, for all $0 \leq k \leq n$,
	\begin{align*}
	& (M_{p,p^2} + M_{p,p^2}(M^*)_{p,\epsilon} + M_{p,p})(M^{\omega,k})_p 	\\
	= \quad & M_{p,p^2}((M^{\omega,k})_p +(M^*)_{p,\epsilon}(M^{\omega,k})_p)
	+ M_{p,p}(M^{\omega,k})_p \\
	= \quad & M_{p,p^2}(M^{\omega,k})_{p^2} + M_{p,p}(M^{\omega,k})_p \ = \ (MM^{\omega,k})_p \ = \ (M^{\omega,k})_p \, .
	\end{align*}
\end{proof}

\begin{cor}\label{cor:2.8}
	Let $M \in \roc$ be a roc-matrix.
	Then, for all $0 \leq k \leq n$,  $(M^{\omega,k})_p$ is a solution of 
	\begin{equation*}
	z = (M_{p,p^2} + M_{p,p^2}(M^*)_{p,\epsilon}+M_{p,p})z \, .
	\end{equation*}
\end{cor}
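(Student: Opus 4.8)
The final statement, Corollary~\ref{cor:2.8}, is an immediate consequence of the theorem that precedes it, so the plan is essentially to read it off. The plan is to observe that Theorem~\ref{thm:2.7} asserts, for every $0 \leq k \leq n$, the identity
\begin{equation*}
(M^{\omega,k})_p \ = \ (M_{p,p^2} + M_{p,p^2}(M^*)_{p,\epsilon} + M_{p,p})(M^{\omega,k})_p \, .
\end{equation*}
This says precisely that the vector $z := (M^{\omega,k})_p \in V^n$ satisfies $z = (M_{p,p^2} + M_{p,p^2}(M^*)_{p,\epsilon} + M_{p,p})\,z$, which is the equation displayed in the corollary. Hence $(M^{\omega,k})_p$ is a solution, and nothing further is needed.

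In slightly more detail, the only things one has to check are type-correctness and that the coefficient matrix of the linear equation in the corollary coincides verbatim with the one in Theorem~\ref{thm:2.7}. For the former: since $M \in \roc = (\Sr^{n\times n})^{p^*\times p^*}$ is a roc-matrix over the complete semiring-semimodule pair $(\Sr,V)$, the pair $(\Sr^{n\times n}, V^n)$ is again a complete semiring-semimodule pair (as recalled in Section~\ref{sec:pre}), the blocks $M_{p,p^2}, M_{p,p}$ and $(M^*)_{p,\epsilon}$ all lie in $\Sr^{n\times n}$, and $(M^{\omega,k})_p$ is the $p$-block of the infinite column vector $M^{\omega,k} \in (V^n)^{p^*}$, hence an element of $V^n$; so the equation $z = (M_{p,p^2} + M_{p,p^2}(M^*)_{p,\epsilon}+M_{p,p})z$ is a genuine linear equation over $(\Sr^{n\times n}, V^n)$ and it makes sense to speak of its solutions in $V^n$. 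For the latter: the bracketed expression in Corollary~\ref{cor:2.8} is literally the bracketed expression in Theorem~\ref{thm:2.7}, so no rewriting is involved.

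There is no real obstacle here; the mathematical content has already been discharged in Lemma~\ref{lem:2.6} and Theorem~\ref{thm:2.7}. If anything, the only point worth a sentence is the quantifier: the corollary is stated uniformly for all $0 \leq k \leq n$, and this uniformity is inherited directly from the corresponding uniform statement of Theorem~\ref{thm:2.7}, whose proof in turn invokes Lemma~\ref{lem:2.6} for each such $k$. Accordingly the proof is a one-liner: by Theorem~\ref{thm:2.7}, $(M^{\omega,k})_p$ satisfies the displayed equation, i.e.\ it is a solution, for every $0 \leq k \leq n$.
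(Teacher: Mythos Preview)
Your proposal is correct and matches the paper's approach: the corollary is stated without proof in the paper, being an immediate restatement of Theorem~\ref{thm:2.7} in the language of solutions of the linear equation. Your additional remarks on type-correctness are fine but not required.
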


\section{\texorpdfstring{$\omega$}{Omega}-restricted one-counter automata}
\label{sec:3}

In this section, we define $\omega$-roc automata as a special case of
$\omega$-pushdown automata.
We show that for an $\omega$-roc automaton $\Cmc$
there exists an algebraic system over a complete
semiring-semimodule pair such that the behavior $\bhvr{\Cmc}$ of $\Cmc$ is a component of a solution of this system.

%Throughout this section,
%$\Sr$ denotes a complete star-omega semiring. 
%Hence $(\Sr,\Sr)$ is a complete semiring-semimodule pair.
%By Proposition 4.1 of \'Esik, Kuich \cite{EK},
%$(\Snn, \Sr^n)$ is again a complete 
%semiring-semimodule pair (just omit Axiom 4 in the proof).
%Let $\Sr' \subseteq \Sr$ with $0,1 \in \Sr'$.
In the sequel, $(\Sr, V)$ is a complete semiring-semimodule pair and $\Sr'$ is a subset of $\Sr$ containing $0$ an $1$.
An \emph{$\Sr'$-$\omega$-pushdown automaton}
$$\Pmc = (n,\Gamma,I,M,P,p_0,k)$$
is given by
\begin{enumerate}[label=(\roman*)]
	\item a finite set of \emph{states} $\{1,\dots,n\}$, 
	$n\geq 1$,
	\item an alphabet $\Gamma$ of \emph{pushdown symbols},
	\item a \emph{pushdown transition matrix} 
	$M \in  (\Spnn)^{\Gamma^* \times \Gamma^*}$,
	\item an \emph{initial state vector} $I \in {S'}^{1 \times n}$,
	\item a \emph{final state vector} $P \in {S'}^{n \times 1}$,
	\item an \emph{initial pushdown symbol} $p_0 \in \Gamma$,
	\item a set of \emph{repeated states} $\{1,\dots,k\}$, $0 \leq k \leq n$.
\end{enumerate}

The definition of a pushdown transition matrix is given
at the beginning of Section \ref{sec:2}. (See also Kuich, Salomaa \cite{88}, Kuich \cite{78} and
\'Esik, Kuich \cite{MAT}.)
Clearly, any roc-matrix is a pushdown transition matrix.

The \emph{behavior of $\Pmc$} is an element of 
$\Sr \times V$ and is defined by 
\begin{equation*}
\bhvr{\Pmc} = (I (M^*)_{p_0,\epsilon}P, I(M^{\omega,k})_{p_0}) \, .
\end{equation*}

Here $I(M^*)_{p_0,\epsilon}P$ is the behavior of the 
$\Sr'$-$\omega$-pushdown automaton
$\Pmc_1=(n,\Gamma,I,M,P,p_0,0)$ and 
$I(M^{\omega,k})_{p_0}$ is the behavior of the $\Sr'$-$\omega$-pushdown automaton 
$\Pmc_2= (n,\Gamma,I,M,0,p_0,k)$.
Observe that $\Pmc_2$ is an automaton with the 
B\"{u}chi acceptance condition: 
if $G$ is the graph with adjacency matrix $M$, then only paths that visit 
the repeated states ${1,\dots,k}$ infinitely often contribute to 
$\bhvr{\Pmc_2}$.
Furthermore, $\Pmc_1$ contains no repeated states and behaves like an ordinary $\Sr'$-pushdown automaton.

An $\Sr'$-$\omega$-roc automaton is an $\Sr'$-$\omega$-pushdown automaton with just one pushdown
symbol such that its pushdown matrix is a roc-matrix.

In the sequel, an $\Sr'$-$\omega$-roc automaton 
$\Pmc=(n,\{p\},I,M,P,p,k)$ is denoted by 
$\Cmc = (n,I,M,P,k)$ with behavior
\begin{equation*}
\bhvr{\Cmc} = (I (M^*)_{p,\epsilon}P, I(M^{\omega,k})_{p}) \, .
\end{equation*}

\begin{rem}
	Consider an $\Sr'$-$\omega$-pushdown automaton $\Pmc$
	with just one pushdown symbol.
	By the construction in the proof of Theorem 13.28 of
	Kuich, Salomaa \cite{88}, an $\Sr'$-$\omega$-roc
	automaton $\Cmc$ can be constructed such that 
	$\bhvr{\Cmc}=\bhvr{\Pmc}$. 
\end{rem}

The next definitions and results are taken from
\'Esik, Kuich \cite[Section 5.6]{MAT}.
For the definition of an $\Sr'$-algebraic system over a quemiring  $S \times V$
we refer the reader to \cite{MAT}, page 136, and for the definition of
quemirings to \cite{MAT}, page 110. 
Here we note that a quemiring   $T$
is isomorphic to a quemiring $S \times V$ determined by the semiring-semimodule pair  $(S,V)$,
cf. \cite{MAT}, page 110.

Observe that the forthcoming system \eqref{sys:1} is a system over the quemiring $\Snn \times V^n$.
Compare the forthcoming algebraic system \eqref{sys:2}
with the algebraic systems occurring in the proofs of 
Theorem 14.15 of Kuich, Salomaa \cite{88}
and Theorem 6.4 of Kuich \cite{78},
both in the case of a roc-matrix.

Let $M$ be a roc-matrix. 
Consider the $\Spnn$-algebraic system over the complete semiring-semimodule pair 
$(\Snn, V^n)$
\begin{equation}
\label{sys:1}
y \ = \ M_{p,p^2} y y + M_{p,p}y + M_{p,\epsilon} \, .
\end{equation}
Then by Theorem 5.6.1 of \'Esik, Kuich \cite{MAT}
$(A, U) \in (\Snn, V^n)$ is a solution of 
\eqref{sys:1} iff $A$ is a solution of the
$\Spnn$-algebraic system over $\Snn$
\begin{equation}
\label{sys:2}
x= M_{p,p^2}xx + M_{p,p}x + M_{p,\epsilon} 
\end{equation}
and $U$ is a solution of the 
$\Snn$-linear system over $V^n$
\begin{equation}
\label{sys:3}
z= M_{p,p^2}z + M_{p,p^2}Az + M_{p,p}z \, . 
\end{equation}

\begin{thm}\label{thm:3.1}
	Let $\Sr$ be a complete starsemiring and $M$ 
	be a roc-matrix.
	Then $(M^*)_{p,\epsilon}$ is a solution of the 
	$\Spnn$-algebraic system \eqref{sys:2}.
	If $S$ is a continuous starsemiring, 
	then $(M^*)_{p,\epsilon}$ is the least solution
	of \eqref{sys:2}.	
\end{thm}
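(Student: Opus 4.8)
The plan is to establish that $(M^*)_{p,\epsilon}$ satisfies the fixed-point equation \eqref{sys:2} by a direct computation that expands $(M^*)_{p,\epsilon}$ according to the first transition of each path from $p$ to $\epsilon$, and then to derive leastness from continuity via standard fixed-point theory. First I would recall that $M^* = \sum_{m \geq 0} M^m$ and hence
\begin{equation*}
(M^*)_{p,\epsilon} \ = \ \sum_{m \geq 0} (M^{m+1})_{p,\epsilon} \ = \ M_{p,\epsilon} + \sum_{m \geq 1}(M^{m+1})_{p,\epsilon} \, .
\end{equation*}
For $m \geq 1$, the first step of a path from $p$ to $\epsilon$ of length $m+1$ goes to $p^2$, to $p$, or to $\epsilon$; the last case only contributes to the $m=0$ term already separated off. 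So I would split $\sum_{m \geq 1}(M^{m+1})_{p,\epsilon}$ according to whether the first factor is $M_{p,p^2}$ or $M_{p,p}$, obtaining $M_{p,p^2}\cdot(\text{paths from }p^2\text{ to }\epsilon) + M_{p,p}\cdot(M^*)_{p,\epsilon}$.

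The key step is then to identify the middle factor: the sum of weights of all paths from $p^2$ to $\epsilon$ equals $(M^*)_{p^2,\epsilon}$, and by Theorem~\ref{thm:2.1} with $i=1$ (or Corollary~\ref{cor:2.2} with $i=2$) this equals $(M^*)_{p,\epsilon}(M^*)_{p,\epsilon}$. Substituting, we get
\begin{equation*}
(M^*)_{p,\epsilon} \ = \ M_{p,p^2}(M^*)_{p,\epsilon}(M^*)_{p,\epsilon} + M_{p,p}(M^*)_{p,\epsilon} + M_{p,\epsilon} \, ,
\end{equation*}
which is exactly the statement that $x = (M^*)_{p,\epsilon}$ solves \eqref{sys:2}. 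One subtlety to handle carefully is that the regrouping of the infinite sum over all path lengths $m$ into the three summands (and the re-indexing inside the $p^2$-to-$\epsilon$ block) uses completeness of $\Sr^{n\times n}$, i.e., the associativity, commutativity and distributivity axioms (i)--(iii); since $\Sr$ is a complete starsemiring this is exactly what is available, and I would invoke it explicitly rather than treat it as automatic.

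For the leastness claim, I would argue that when $\Sr$ is a continuous starsemiring, so is $\Sr^{n\times n}$, and the right-hand side of \eqref{sys:2} defines a continuous (hence monotone, $\omega$-continuous) map on $\Sr^{n\times n}$; its least fixed point is therefore the supremum of its iterates applied to $0$. A routine induction on $j$ shows that the $j$-th iterate equals the sum of weights of all paths from $p$ to $\epsilon$ whose maximal intermediate pushdown height is bounded in terms of $j$ (equivalently, of bounded "depth" of the derivation tree), and the supremum over $j$ of these partial sums is precisely $(M^*)_{p,\epsilon} = \sum_{m\geq 0}(M^{m+1})_{p,\epsilon}$. Hence $(M^*)_{p,\epsilon}$ is the least solution. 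The main obstacle I anticipate is the bookkeeping in this last induction — making the correspondence between "approximant of the algebraic system" and "paths of bounded pushdown excursion" precise — but this is the standard Kleene-iteration argument for algebraic (context-free) systems over continuous semirings, and one may alternatively just cite the general fixed-point result (e.g.\ Theorem 5.6.1 or the surrounding theory of \'Esik, Kuich \cite{MAT}) to shortcut it.
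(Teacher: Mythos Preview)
Your proposal is correct and follows essentially the same route as the paper: both use Theorem~\ref{thm:2.1} (equivalently Corollary~\ref{cor:2.2}) to identify $(M^*)_{p,\epsilon}(M^*)_{p,\epsilon}$ with $(M^*)_{p^2,\epsilon}$, and both exploit that the only nonzero blocks in row $p$ of $M$ are $M_{p,p^2}$, $M_{p,p}$, $M_{p,\epsilon}$. The paper packages your first-step decomposition as the single identity $M_{p,p^2}(M^*)_{p^2,\epsilon} + M_{p,p}(M^*)_{p,\epsilon} + M_{p,\epsilon} = (MM^*)_{p,\epsilon} = (M^+)_{p,\epsilon} = (M^*)_{p,\epsilon}$, and for the leastness claim it simply cites Theorem~6.4 of Kuich~\cite{78} rather than redoing the Kleene-iteration bookkeeping (your suggested citation of Theorem~5.6.1 of \cite{MAT} is not the right one for this step).
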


\begin{proof}
	We obtain, by Theorem \ref{thm:2.1} 
	\begin{align*}
	& \Mppsq \Mspe \Mspe + \Mpp \Mspe + \Mpe \\
	= \ & \Mppsq (M^*)_{p^2,\epsilon} + \Mpp \Mspe + \Mpe  \\
	= \ & (MM^*)_{p,\epsilon} = (M^+)_{p,\epsilon} = \Mspe \, .
	\end{align*}
	This proves the first sentence of our theorem.
	The second sentence of Theorem \ref{thm:3.1} is proved by 
	Theorem 6.4 of Kuich \cite{78}. 
\end{proof}

\begin{thm}\label{thm:3.2}
	Let $(\Sr, V)$ be a complete semiring-semimodule pair and $M$ be a roc-matrix. Then
	\begin{equation*}
	(\Mspe, (M^{\omega,k})_p)\, ,
	\end{equation*}	
	is a solution of the $\Spnn$-algebraic system 
	\eqref{sys:1}, for each \ $0\leq k \leq n$.
\end{thm}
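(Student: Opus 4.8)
The plan is to deduce the statement from the solution-splitting for the mixed system \eqref{sys:1} that was recalled just before the theorem, together with Theorem~\ref{thm:3.1} and Theorem~\ref{thm:2.7}.

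First I would invoke Theorem 5.6.1 of \'Esik, Kuich \cite{MAT} in the form quoted above: a pair $(A,U) \in (\Snn, V^n)$ is a solution of the $\Spnn$-algebraic system \eqref{sys:1} if and only if $A$ is a solution of the $\Spnn$-algebraic system \eqref{sys:2} over $\Snn$ and $U$ is a solution of the $\Snn$-linear system \eqref{sys:3} over $V^n$ in which the coefficient matrix is precisely this solution $A$. Hence it suffices to verify the two components separately for $A = \Mspe$ and $U = (M^{\omega,k})_p$.

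The component for $A$ is immediate: by Theorem~\ref{thm:3.1}, $\Mspe$ is a solution of \eqref{sys:2}. For the component for $U$, I would substitute $A = \Mspe$ into \eqref{sys:3}; using left-distributivity of the $\Snn$-action on $V^n$, the resulting instance of \eqref{sys:3} reads
\begin{equation*}
z \ = \ \Mppsq z + \Mppsq \Mspe z + \Mpp z \ = \ (\Mppsq + \Mppsq \Mspe + \Mpp) z \, .
\end{equation*}
By Theorem~\ref{thm:2.7} (equivalently, Corollary~\ref{cor:2.8}), the vector $(M^{\omega,k})_p$ satisfies exactly this equation for every $0 \leq k \leq n$. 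Thus $U = (M^{\omega,k})_p$ solves the instance of \eqref{sys:3} attached to $A = \Mspe$, and combining the two components via the equivalence above yields that $(\Mspe, (M^{\omega,k})_p)$ is a solution of \eqref{sys:1}, which is the assertion.

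I do not expect any genuine obstacle: the theorem is essentially a repackaging of Theorems~\ref{thm:3.1} and~\ref{thm:2.7} through the solution-splitting for \eqref{sys:1}. The only point that deserves a line of care is checking that the concrete instance of the linear system \eqref{sys:3} determined by the particular solution $A = \Mspe$ of \eqref{sys:2} is literally the fixed-point equation appearing in Theorem~\ref{thm:2.7}, which it is after one application of distributivity. It is also worth noting that the argument is uniform in $k$, since Theorem~\ref{thm:2.7} was proved for all $0 \leq k \leq n$ simultaneously.
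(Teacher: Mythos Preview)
Your proof is correct and follows essentially the same approach as the paper's own argument: you invoke Corollary~\ref{cor:2.8} (via Theorem~\ref{thm:2.7}) for the $z$-component, Theorem~\ref{thm:3.1} for the $x$-component, and combine them through the solution-splitting equivalence (Theorem~5.6.1 of \'Esik, Kuich~\cite{MAT}). Your explicit verification that the instance of \eqref{sys:3} with $A=\Mspe$ is exactly the equation in Corollary~\ref{cor:2.8} is a helpful clarification, but otherwise the arguments are identical.
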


\begin{proof}
	Let $0 \leq k \leq n$, and consider the $\Snn$-linear system
	\begin{equation*}
	z= (\Mppsq+\Mppsq\Mspe + \Mpp) z \, .
	\end{equation*}
	By Corollary \ref{cor:2.8}, $\Momk_p$ is a solution of this system.
	Hence, by Theorem 5.6.1 of \'Esik, Kuich \cite{MAT} (see the remark above)
	and Theorem \ref{thm:3.1},
	$(\Mspe,\Momk_p)$ is a solution of the system
	\eqref{sys:1}.
\end{proof}

Observe that, if $\Sr$ is a continuous semiring,
then the $\Snn$-linear system in the proof of
Theorem \ref{thm:3.2} is in fact an $\mathfrak{Alg}({\Sr'})^{n\times n}$-linear system
(see Kuich \cite[p. 623]{78}).

\begin{thm}\label{thm:3.3}
	Let $(\Sr, V)$ be a complete semiring-semimodule pair and let $\Cmc = (n,I,M,P,k)$ be an $\Sr'$-$\omega$-roc-automaton. 
	Then $(\bhvr{\Cmc},(\Mspe,\Momk_p))$
	is a solution of the $\Spnn$-algebraic system
	\begin{equation}\label{sys:4}
	y_0 = I y P \  ,  \quad y = \Mppsq yy + \Mpp y + \Mpe 
	\end{equation} 
	over the complete semiring-semimodule pair $(\Snn,V^n)$.
\end{thm}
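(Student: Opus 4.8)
The plan is to assemble the statement directly from the two previous theorems, since everything needed is already in place. Recall that $\bhvr{\Cmc} = (I(M^*)_{p,\epsilon}P, I(M^{\omega,k})_p)$, and the claimed solution is the pair whose first component is $\bhvr{\Cmc}$ (a point of the quemiring $\Sr\times V$, via the scalar $y_0$-part) and whose second component is $(\Mspe, \Momk_p) \in (\Snn, V^n)$. Because system \eqref{sys:4} is the system \eqref{sys:1} augmented with the single extra equation $y_0 = IyP$, a solution of \eqref{sys:4} is nothing more than a solution $(A,\U)$ of \eqref{sys:1} together with the value $I A P$ substituted for $y_0$ in its scalar and vector parts, i.e.\ $(I A P, I \U)$.

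First I would invoke Theorem \ref{thm:3.2}: for each $0\leq k\leq n$, the pair $(\Mspe, \Momk_p)$ is a solution of the $\Spnn$-algebraic system \eqref{sys:1} over $(\Snn, V^n)$. This discharges the ``$y$-part'' of \eqref{sys:4} outright. Then I would plug this solution into the defining equation $y_0 = IyP$: multiplying on the left by the row vector $I \in {S'}^{1\times n}$ and on the right by the column vector $P \in {S'}^{n\times 1}$ sends the $\Snn$-component $\Mspe$ to the scalar $I(M^*)_{p,\epsilon}P$ and the $V^n$-component $\Momk_p$ to $I(M^{\omega,k})_p \in V$; together these form exactly $\bhvr{\Cmc}$. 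Hence $(\bhvr{\Cmc}, (\Mspe, \Momk_p))$ satisfies both equations of \eqref{sys:4}, which is the assertion.

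The only genuine point requiring care is bookkeeping across the quemiring structure: I should make explicit that an $\Sr'$-algebraic system over the quemiring $\Snn\times V^n$ with an added equation of the form $y_0 = IyP$ has its solution set in bijection with solutions of the underlying system \eqref{sys:1}, the $y_0$-coordinate being forced, and that the action of the scalar vectors $I,P$ on the pair $(\Snn,V^n)$ respects the semiring-semimodule operations in the way $\bhvr{\Cmc}$ is defined. This is routine given the conventions of \cite[Sections 5.6]{MAT} recalled above, so I expect no real obstacle — the substance of the result is Theorem \ref{thm:3.2}, which in turn rests on Corollary \ref{cor:2.8} and Theorem \ref{thm:3.1}; the present theorem is the clean packaging of those facts at the level of the automaton $\Cmc$.

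\begin{proof}
By Theorem \ref{thm:3.2}, the pair $(\Mspe, \Momk_p)$ is a solution of the $\Spnn$-algebraic system \eqref{sys:1} over the complete semiring-semimodule pair $(\Snn, V^n)$, for the given $k$ with $0\leq k\leq n$. Thus the $y$-equation of \eqref{sys:4} is satisfied by $y = (\Mspe, \Momk_p)$. It remains to verify the equation $y_0 = IyP$. Substituting this solution and using the compatibility of the action of $I\in{S'}^{1\times n}$ and $P\in{S'}^{n\times 1}$ with the operations of $(\Snn, V^n)$, the right-hand side $IyP$ evaluates, in its $\Sr$-component, to $I(M^*)_{p,\epsilon}P$, and in its $V$-component to $I(M^{\omega,k})_p$. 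By definition $\bhvr{\Cmc} = (I(M^*)_{p,\epsilon}P, I(M^{\omega,k})_p)$, so $y_0 = \bhvr{\Cmc}$ together with $y = (\Mspe, \Momk_p)$ solves \eqref{sys:4}. Hence $(\bhvr{\Cmc}, (\Mspe, \Momk_p))$ is a solution of \eqref{sys:4}, as claimed.
\end{proof}
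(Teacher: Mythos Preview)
Your proof is correct and follows essentially the same approach as the paper's own proof: invoke Theorem~\ref{thm:3.2} to settle the $y$-equation, then substitute into $y_0 = IyP$ and observe that $I(\Mspe,\Momk_p)P = (I\Mspe P, I\Momk_p) = \bhvr{\Cmc}$. Your additional remarks about the quemiring bookkeeping are accurate but not needed beyond what the paper's one-line computation already records.
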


\begin{proof}
	By Theorem \ref{thm:3.2}, $(\Mspe, \Momk_p)$ is a solution of the second equation.
	Since 
	\begin{equation*}
	I(\Mspe, \Momk_p)P = (I\Mspe P, I \Momk_p) = \bhvr{\Cmc} \, ,
	\end{equation*}	
	$(\bhvr{\Cmc}, (\Mspe,\Momk_p))$ is a solution of the
	given $\Spnn$-algebraic system.	
\end{proof}	

Let now $\Sr$ be a complete star-omega semiring and $\Sigma$ be an alphabet.
Then by Theorem 5.5.5 of \'Esik, Kuich \cite{MAT},
$(\Sr\ll\! \Sigma^*\! \gg, \Sr \ll \! \Sigma^\omega\! \gg )$ is a complete semiring-semimodule pair.

Let $\Cmc\!  =  \!(n, I, M, P, k)$ be an 
$\Sr \langle \Sigma  \cup  \{\epsilon\}  \rangle $-$\omega$-roc automaton.
Consider the algebraic system \eqref{sys:4}
over the complete semiring-semimodule pair 
${((\Sr\ll \! \Sigma^*\! \gg)^{n \times n},} \allowbreak {(\Sr \ll \! \Sigma^\omega \! \gg)^n )}$
and the mixed algebraic system \eqref{sys:5}
over ${((\Sr\ll \Sigma^* \gg)^{n \times n},}$ ${(\Sr \ll \Sigma^\omega \gg)^n )}$ induced by \eqref{sys:4}
\begin{equation}\label{sys:5}
\begin{alignedat}{3}
x_0 & = I x P, \qquad & x & = \Mppsq xx + \Mpp x + \Mpe \, ,  \\
z_0 & = I z, & z & = \Mppsq z + \Mppsq xz + \Mpp z \, .
\end{alignedat}
\end{equation}
Then, by Theorem \ref{thm:3.3},
\begin{equation*}
(I\Mspe P , \Mspe, I\Momk_p, \Momk_p) , \ 0\leq k \leq n,
\end{equation*}
is a solution of \eqref{sys:5}.
It is called \emph{solution of order $k$}.
Hence, we have proved the next theorem.

\begin{thm}\label{thm:3.4}
	Let $\Sr$ be a complete star-omega 
	semiring and $\Cmc = (n,I,M,P,k)$ be an
	$\Sr \langle \Sigma \cup \{\epsilon\} \rangle$-$\omega$-roc automaton.
	Then $(I\Mspe P , \Mspe, I\Momk_p, \Momk_p) , \ 0\leq k \leq n$,
	is a solution of the mixed algebraic system \eqref{sys:5}.
	\qed
\end{thm}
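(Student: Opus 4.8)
The plan is to derive Theorem~\ref{thm:3.4} as an essentially immediate corollary of Theorem~\ref{thm:3.3}, the only work being to track how a solution of the algebraic system \eqref{sys:4} over the quemiring $(\Snn, V^n)$ unwinds into a solution of the mixed system \eqref{sys:5} over the two separate coordinates $x$ (and $x_0$) and $z$ (and $z_0$). First I would recall the setup: since $\Sr$ is a complete star-omega semiring and $\Sigma$ an alphabet, Theorem 5.5.5 of \'Esik, Kuich \cite{MAT} guarantees that $(\Sr \ll \Sigma^* \gg, \Sr \ll \Sigma^\omega \gg)$ is a complete semiring-semimodule pair, and hence so is $((\Sr \ll \Sigma^* \gg)^{n\times n}, (\Sr \ll \Sigma^\omega \gg)^n)$ by the matrix construction cited in Section~\ref{sec:pre}. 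An $\Sr\langle\Sigma\cup\{\epsilon\}\rangle$-$\omega$-roc automaton $\Cmc = (n,I,M,P,k)$ is in particular an $\Sr'$-$\omega$-roc automaton for $\Sr' = \Sr\langle\Sigma\cup\{\epsilon\}\rangle$, so Theorem~\ref{thm:3.3} applies verbatim: $(\bhvr{\Cmc}, (\Mspe, \Momk_p))$ is a solution of \eqref{sys:4}.

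Next I would spell out what ``solution of \eqref{sys:4}'' means coordinate-wise. Writing the $y$-component as a pair $(x, z) \in (\Snn, V^n)$ under the quemiring isomorphism of \cite[p.~110]{MAT}, the equation $y = \Mppsq yy + \Mpp y + \Mpe$ splits: the semiring coordinate gives $x = \Mppsq xx + \Mpp x + \Mpe$, and the semimodule coordinate gives $z = \Mppsq(xx + xz + zx \text{-type terms})$, which upon using the action axioms and the defining relations of a quemiring collapses exactly to $z = \Mppsq z + \Mppsq xz + \Mpp z$ — this is precisely the passage from \eqref{sys:1} to \eqref{sys:2} and \eqref{sys:3} already recorded in the excerpt via Theorem 5.6.1 of \cite{MAT}. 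Similarly the scalar equation $y_0 = I y P$ splits into $x_0 = I x P$ in the semiring coordinate and $z_0 = I z$ in the semimodule coordinate (here the final vector $P$ acts trivially on the $V^n$-part in the relevant sense, which is why $z_0 = Iz$ rather than $IzP$). Plugging in the known solution from Theorem~\ref{thm:3.3}, the quadruple $(I\Mspe P, \Mspe, I\Momk_p, \Momk_p)$ satisfies all four equations of \eqref{sys:5}, for each $0 \leq k \leq n$; this is the ``solution of order $k$'' named just before the theorem statement.

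I expect the main obstacle — really the only non-bookkeeping point — to be making the coordinate split of the algebraic system \eqref{sys:4} into the mixed system \eqref{sys:5} fully rigorous, i.e.\ verifying that the quemiring equation $y = \Mppsq yy + \Mpp y + \Mpe$ with $y = (x,z)$ reproduces exactly the two equations displayed in \eqref{sys:5} and no others. This is where one must be careful about how the product $yy$ of quemiring elements decomposes and about the role of $P$ in $y_0 = IyP$; but since the excerpt has already invoked Theorem 5.6.1 of \cite{MAT} for precisely the system \eqref{sys:1}/\eqref{sys:2}/\eqref{sys:3}, and \eqref{sys:5} is by construction the mixed system \emph{induced by} \eqref{sys:4}, this decomposition is available off the shelf and the argument is short. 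Accordingly the proof reduces to: invoke Theorem~\ref{thm:3.3}, observe that \eqref{sys:5} is the mixed system induced by \eqref{sys:4}, and conclude by the quemiring/semiring-semimodule correspondence that the stated quadruple is a solution — so the theorem can reasonably be marked \qed with at most a one-line justification, as the authors have in fact done.
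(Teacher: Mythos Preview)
Your proposal is correct and follows exactly the paper's approach: the paper's proof is the short paragraph preceding the theorem statement, which invokes Theorem~5.5.5 of \cite{MAT} to get the complete semiring-semimodule pair, applies Theorem~\ref{thm:3.3} to obtain a solution of \eqref{sys:4}, and observes that \eqref{sys:5} is by definition the mixed system induced by \eqref{sys:4}, so the quadruple is a solution --- whence the bare \qed. Your write-up simply makes explicit the quemiring-to-mixed-system unwinding that the paper leaves implicit via the phrase ``induced by \eqref{sys:4}'' and the earlier reference to Theorem~5.6.1 of \cite{MAT}.
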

Let now in \eqref{sys:5}
\begin{equation*}
x = ( [i,p,j])_{1 \leq i,j \leq n}
\end{equation*}
be an $n \times n$-matrix of variables and
\begin{equation*}
z = ([i,p])_{1 \leq i \leq n}
\end{equation*}
be an $n$-dimensional column vector of variables.
If we write the mixed algebraic system \eqref{sys:5}
component-wise, we obtain a mixed algebraic system over
$({(\Sr\ll \Sigma^* \gg),} {(\Sr \ll \Sigma^\omega \gg)} )$
with variables $\ipj$ over $\Sr\llss$, where $1 \leq i,j \leq n$, and
variables $[i,p]$ over $\Sr \ll \Sigma^\omega \gg$,
where $1 \leq i \leq n$.
Observe that we do not really need $p$ in the notation of the variables.
But we want to save the form of the triple construction
in connection with pushdown automata.

Let 
\begin{equation*}
\Mppsq = (a_{ij})_{1 \leq i,j, \leq n}, 
\Mpp = (c_{ij})_{1 \leq i,j, \leq n},
\Mpe = (b_{ij})_{1 \leq i,j, \leq n}
\end{equation*}
and write \eqref{sys:5} with the matrices $x$ and $z$ of variables component-wise then we obtain:
\begin{equation}
\label{sys:6}
\begin{alignedat}{4}
x_0 \ &  = & & \sum_{1 \leq m_1, m_2 \leq n} I_{m_1} 
[m_1,p,m_2] P_{m_2}\\
[i,p,j]\ &  = & & \sum_{1 \leq m_1, m_2 \leq n}  a_{im_1}
[m_1,p,m_2] [m_2,p,j] +\\
& & & \sum_{1\leq m \leq n} c_{im}[m,p,j] + b_{ij}\\
z_0 \ &  = && \sum_{1 \leq m \leq n} I_m [m,p]\\
[i,p] \ & = && \sum_{1 \leq m \leq n} a_{im} [m,p] +
\sum_{1 \leq m_1, m_2 \leq n} a_{im_1} [m_1,p,m_2]
[m_2,p] + \\
& & & \sum_{1\leq m \leq n} c_{im} [m,p]
\end{alignedat}
\end{equation}
for all $1 \leq i,j \leq n$.

%Let now 
%\begin{equation*}
%(\sigma_0, (\sigma_{ij})_{1 \leq i,j \leq n}, \tau_0, (\tau_i)_{1 \leq i \leq n})
%\end{equation*}
%be the solution of order $k$, $k \in \{0, \dots, n\}$
%of this component-wise written system \eqref{sys:6}.
%Then for $1 \leq i,j \leq n$, 
%\begin{equation*}
%\sigma_{ij} = (\Mspe)_{ij}, \quad \tau_i = (\Momk_p)_i,
%\quad (\sigma_0,\tau_0) = \bhvr{\Cmc} \, .
%\end{equation*}

\begin{thm}\label{thm:3.5}
	Let $\Sr$ be a complete star-omega 
	semiring and $\Cmc = (n,I,M,P,k)$ be an
	$\Sr \langle \Sigma \cup \{\epsilon\} \rangle$-$\omega$-roc automaton.
	Then 
	\begin{equation*}
	(\sigma_0,  ( \Mspe)_{ij}, \tau_0, \Momk_p)
	\end{equation*}
	is a solution of the system \eqref{sys:6} with
	$\bhvr{\Cmc} = (\sigma_0, \tau_0)$ .
\end{thm}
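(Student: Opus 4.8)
The plan is to observe that the system \eqref{sys:6} is nothing but the component-wise transcription of the mixed algebraic system \eqref{sys:5}, so that solutions of the two systems correspond to each other, and then to invoke Theorem \ref{thm:3.4} together with the definition of $\bhvr{\Cmc}$.

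First I would make the correspondence between \eqref{sys:5} and \eqref{sys:6} precise. Reading $x = ([i,p,j])_{1 \leq i,j \leq n}$ as a matrix of variables over $\Sr\llss$ and $z = ([i,p])_{1 \leq i \leq n}$ as a column vector of variables over $\Sr\llso$, one expands each matrix and matrix-vector product occurring in \eqref{sys:5} using the definition of the product in $(\Sr\llss)^{n\times n}$ and of the action of $(\Sr\llss)^{n\times n}$ on $(\Sr\llso)^n$. Since the only products that occur are scalar-scalar products (in the $x$- and $x_0$-equations) and scalar-semimodule products (in the $z$- and $z_0$-equations) --- no product of two semimodule elements ever appears --- this expansion is a purely term-by-term rewriting and yields exactly the equations of \eqref{sys:6}: for instance $(\Mppsq xx)_{ij} = \sum_{1\leq m_1,m_2\leq n} \aimi [m_1,p,m_2][m_2,p,j]$, $I x P = \sum_{1\leq m_1,m_2\leq n} I_{m_1}[m_1,p,m_2]P_{m_2}$, $(\Mppsq z)_i = \sum_{1\leq m\leq n} \aim[m,p]$, $(\Mppsq x z)_i = \sum_{1\leq m_1,m_2\leq n} \aimi [m_1,p,m_2][m_2,p]$, and likewise for the remaining summands. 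Consequently, for scalars $\sigma_0,\tau_0$, a matrix $A=(A_{ij})\in(\Sr\llss)^{n\times n}$ and a vector $U=(U_i)\in(\Sr\llso)^n$, the tuple $(\sigma_0,(A_{ij}),\tau_0,(U_i))$ solves \eqref{sys:6} if and only if $(\sigma_0,A,\tau_0,U)$ solves \eqref{sys:5}.

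Next I would apply the results already established. By Theorem \ref{thm:3.4}, $(I\Mspe P,\ \Mspe,\ I\Momk_p,\ \Momk_p)$ is a solution of \eqref{sys:5} for the given $k$. By the definition of the behavior of an $\Sr'$-$\omega$-roc automaton, $\bhvr{\Cmc} = (I\Mspe P,\ I\Momk_p)$; writing $\bhvr{\Cmc} = (\sigma_0,\tau_0)$ therefore gives $\sigma_0 = I\Mspe P$ and $\tau_0 = I\Momk_p$. Substituting these identifications and reading the solution of \eqref{sys:5} component-wise through the correspondence of the previous paragraph yields that $(\sigma_0,\ (\Mspe)_{ij},\ \tau_0,\ \Momk_p)$ is a solution of \eqref{sys:6}, which is the assertion.

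The only real work is the bookkeeping of the first step: checking that the component-wise form of \eqref{sys:5} is literally \eqref{sys:6}. This is where one must be careful that the distributivity of the semimodule action and the associativity rules of an $(\Sr,V)$-semimodule pair are exactly what validate the term-by-term rewriting (in particular that $\Mppsq x z$ is read as $\Mppsq(xz)$ with $xz\in(\Sr\llso)^n$, matching $\sum_{m_1}\aimi\sum_{m_2}[m_1,p,m_2][m_2,p]$). Once this routine verification is done, the theorem follows immediately from Theorem \ref{thm:3.4}.
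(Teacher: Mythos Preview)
Your proposal is correct and follows essentially the same route as the paper: the paper's proof is the single line ``By Theorem \ref{thm:3.4}'', relying on the text preceding the theorem where it has already been explained that \eqref{sys:6} is exactly the component-wise transcription of \eqref{sys:5}. You simply make this correspondence explicit and then invoke Theorem \ref{thm:3.4} and the definition of $\bhvr{\Cmc}$, which is precisely the intended argument.
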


\begin{proof}
	By Theorem \ref{thm:3.4}.
\end{proof}

\section{Mixed algebraic systems and mixed context-free grammars}
\label{sec:4}

In this section we associate a mixed context-free 
grammar with finite and infinite derivations to the 
algebraic system \eqref{sys:6}.
The language generated by this mixed context-free 
grammar is then the behavior $\bhvr{\Cmc}$ of the
$\omega$-roc automaton $\Cmc$.
The construction of the mixed context-free grammar from
the $\omega$-roc automaton $\Cmc$ is a generalization of 
the well-known triple construction in case of roc 
automata and is called now
\emph{triple-pair construction for $\omega$-roc 
	automata}.
We will consider the commutative complete star-omega semirings 
$\B = ( \{0,1\}, \vee, \land, *,0,1)$ with
$0^* =1^*=1$ and $\N^\infty = (\N \cup \{\infty\}, +, \cdot, ^*, 0,1)$ 
with $0^* =1$ and $a^* = \infty$ for $a \neq \infty$.

If $\Sr = \mathbb{B}$ or $\Sr = \N^\infty$ and $1 \leq k \leq n$, then we associate to the mixed algebraic system \eqref{sys:6} over
$((\Sr\ll \Sigma^* \gg), (\Sr \ll \Sigma^\omega \gg) )$
the \emph{mixed context-free grammar}
\begin{equation*}
G_k \ = \ (X,Z,\Sigma, P_X, P_Z, x_0, z_0, k) \, .
\end{equation*}
(See also \'Esik, Kuich \cite[page 139]{MAT}.)
Here 

\begin{enumerate}[label=(\roman{*})]
	\item $X=\{x_0\} \cup \{[i,p,j]\mid 1\leq i,j\leq n\}$ is a set of \emph{variables for finite derivations};
	\item $Z = \{z_0\} \cup \{[i,p] \mid 1 \leq i \leq n\}$ is a set of \emph{variables for infinite derivations};
	\item $\Sigma$ is an alphabet of \emph{terminal symbols};
	\item $P_X$ is a finite set of \emph{productions for finite derivations} given below;
	\item $P_Z$ is a finite set of \emph{productions for infinite derivations} given below;
	\item $x_0$ is the \emph{start variable for finite derivations};
	\item $z_0$ is the \emph{start variable for infinite derivations};
	\item $\{[i,p] \mid 1 \leq i \leq k \}$ is the set of
	\emph{repeated variables for infinite derivations}.
\end{enumerate}
In the definition of $G_k$
the sets $P_X$ and $P_Z$ are as follows:
\begin{align*}
P_X = \  & \{ x_0 \to a \mpm b \mid \\
& \ \ (I_{m_1},a) \cdot (P_{m_2},b) \neq 0, 
a,b \in \Sigma \cup \{\epsilon\}, 1 \leq m_1, m_2 \leq n \}\ \cup\\
& \{\ipj \to a \mpm [m_2,p,j] \mid \\
& \ \	(\aimi,a) \neq 0, 
a \in \Sigma \cup \{\epsilon\}, 1 \leq i,j,m_1,m_2 \leq n\} \ \cup \\
& \{\ipj \to a \mpj \mid (\cim,a) \neq 0, a \in \Sigma \cup \{\epsilon\},
1 \leq i,j,m \leq n \} \ \cup \\
& \{ \ipj \to a \mid (\bij,a) \neq 0, a \in \Sigma \cup \{\epsilon\},
1 \leq i,j \leq n \} \, ,\\
P_Z = \ & \{z_0 \to a \mp \mid (I_m,a) \neq 0, 
a \in \Sigma \cup \{\epsilon\}, 1 \leq m \leq n \} \ \cup\\
& \{ \ip \to a \mp \mid (\aim,a) \neq 0, a \in \Sigma \cup \{\epsilon\},
1 \leq i,m \leq n \} \ \cup \\
& \{ \ip \to a \mpm [m_2,p] \mid \\
& \ \ (\aimi,a) \neq 0, 
a \in \Sigma \cup \{\epsilon\}, 1 \leq i, m_1, m_2 \leq n \} \ \cup \\
& \{ \ip \to a \mp \mid (\cim,a) \neq 0, 
a \in \Sigma \cup \{\epsilon\}, 1 \leq i,m \leq n \} \, . 
\end{align*}

A \emph{finite leftmost derivation} 
$\alpha_1 \derls \alpha_2$, where
$\alpha_1, \alpha_2 \in (X \cup \Sigma)^*$, by productions
in $P_X$ is defined as usual.
An \emph{infinite (leftmost) derivation} 
$\pi : z_0 \derl^\omega w$, for $z_0 \in Z, w \in \Sigma^\omega$, is defined as follows:
\begin{align*}
\pi:\ & z_0 \derl \alpha_0 [{i_0},p] \derls w_0  [i_0,p] \derl w_0 
\alpha_1 [i_1,p] \derls w_0 w_1  [{i_1},p]  \derl \dots \\
& \derls w_0w_1 \dots w_m  [{i_m},p]  \derl 
w_0 w_1 \dots w_m \alpha_{m+1}
[{i_{m+1}},p]  \derls \dots \, ,
\end{align*}
where $z_0 \to \alpha_0 [{i_0},p] , 
[{i_0},p] \to \alpha_1  [{i_1},p] , \dots, [{i_m},p] \to \alpha_{m+1}[{i_{m+1}},p], \dots$
are productions in $P_Z$ and $w = w_0 w_1 \dots w_m \dots$.

We now define an infinite derivation
$\pi_k : z_0 \derlom w$ for $0 \leq k \leq n$, $z_0 \in Z$, $w \in \Sigma^\omega$:
We take the above definition $\pi:z_0 \Rightarrow^\omega w$ and consider the sequence of the first elements of the variables of $X$ that are rewritten in the finite leftmost derivation $\alpha_m \Rightarrow_L^* w_m$, $m \geq 0$.
Assume this sequence is $i_m^1, i_m^2, \dots, i_m^{t_m}$ for some $t_m$, $m \geq 1$.
Then, to obtain $\pi_k$ from $\pi$, the condition
$i_0, i_1^1, \dots, i_1^{t_1}, i_1, i_2^1, \dots, i_2^{t_2}, i_2, \dots, i_m, i_{m+1}^1, \dots, i_{m+1}^{t_{m+1}}, i_{m+1}, \dots \in P_k$ has to be satisfied.

Then 
\begin{equation*}
L(G_k) =  \{w \in \Sigma^* \mid x_0 \derls w \} \ \cup \ \{ w \in \Sigma^\omega \mid \pi : z_0 \derlom w \} \, . 
\end{equation*}
Observe that the construction of $G_k$ from $\Cmc$ is 
nothing else than a generalization of the triple 
construction in the case of a roc-automaton, if $\Cmc$
is viewed as a pushdown automaton, since the construction
of the context-free grammar $G = (X, \Sigma, P_X,x_0)$
 is the triple construction.
(See Harrison \cite{63}, Theorem 5.4.3; 
Bucher, Maurer \cite{17}, S\"atze 2.3.10, 2.3.30; Kuich, Salomaa \cite{88}, pages 178, 306; Kuich \cite{78}, page 642; 
\'Esik, Kuich \cite{MAT}, pages 77, 78.)

We call the construction of the mixed context-free grammar
$G_k$,  for $0\leq k \leq n$, from $\Cmc$
the \emph{triple-pair construction for $\omega$-roc automata}.
This is justified by the definition of the sets of variables
$\{ \ipj \mid 1 \leq i,j, \leq n\}$ and 
$\{[i,p] \mid 1 \leq i \leq n\}$ of $G_k$ and by the forthcoming Corollary \ref{cor:4.2}.

In the next theorem we use the isomorphism
between ${\B \llss } \times {\B \llso}$ and 
$2^{\Sigma^*} \times 2^{\Sigma^\omega}$.

\begin{thm}\label{thm:4.1}
	Assume that $(\sigma, \tau)$ is the solution of order
	$k$ of the mixed algebraic system \eqref{sys:6} over 
	$(\B \llss , \B \llso)$ for $k \in \{0, \dots, n\}$.
	Then 
	\begin{equation*}
	L(G_k) \ = \ \sigma_{x_0} \cup \tau_{z_0} \, .
	\end{equation*}
\end{thm}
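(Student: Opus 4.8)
The plan is to prove the two inclusions $L(G_k) \subseteq \sigma_{x_0} \cup \tau_{z_0}$ and $\sigma_{x_0} \cup \tau_{z_0} \subseteq L(G_k)$ by connecting derivations in $G_k$ with the approximation sequences for the least solution of the mixed system \eqref{sys:6}. Since $\Sr = \B$ (and the $\N^\infty$ case follows by an analogous count-tracking argument, or is handled separately), the solution of order $k$ is the least solution, so $\sigma$ is the least solution of the $x$-part (the pure context-free system on the variables $x_0, \ipj$) and, with $\sigma$ fixed, $\tau$ is the least solution of the linear $z$-part. The standard Mezei--Wright style correspondence between the context-free grammar $G = (X,\Sigma,P_X,x_0)$ and the algebraic system \eqref{sys:2}/\eqref{sys:6} gives $\{w \mid x_0 \derls w\} = \sigma_{x_0}$ and, more precisely, $\{w \mid \ipj \derls w\} = (\Mspe)_{ij}$ and $\{w \mid \ip \derls w\} = $ the corresponding component of the least solution of the $z$-system when it is read as a context-free (not yet $\omega$) system; this is exactly the triple construction and may be cited from the references already listed (Harrison, Bucher--Maurer, Kuich--Salomaa, Kuich, \'Esik--Kuich). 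So the finite part of the statement is essentially the classical triple construction, and the real work is the infinite part.

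For the infinite part I would set up the following bijective correspondence. An infinite leftmost derivation $\pi : z_0 \derl^\omega w$ decomposes, as in the displayed definition of $\pi$, into an outermost infinite chain of $P_Z$-productions $z_0 \to \alpha_0[i_0,p]$, $[i_0,p]\to\alpha_1[i_1,p]$, \dots, together with finite leftmost derivations $\alpha_m \derls w_m$; here each $\alpha_m$ is a (possibly empty) prefix word over $\Sigma\cup\{\epsilon\}$ followed by $X$-variables of the form $\ipj$ coming from the productions $\ip \to a\mpm[m_2,p]$, plus a possibly nonterminal-free tail. Reading the $P_Z$-productions off the right-hand sides of the $z$-equations in \eqref{sys:6} and the $P_X$-productions off the $x$-equations, an infinite derivation of $w$ corresponds exactly to an infinite unfolding of the equation for some $[i,p]$ that "bottoms out" into the $x$-components along the way; summing over all such unfoldings is precisely how $\tau_{z_0}$ is computed as the $\omega$-behavior, i.e. as $I\Momk_p$ via Theorem~\ref{thm:3.5}. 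The key point is that the Büchi-type side condition built into $\pi_k \colon z_0 \derlom w$ — that the sequence $i_0, i_1^1,\dots,i_1^{t_1}, i_1, i_2^1, \dots$ lies in $P_k$ — matches exactly the condition defining $(M^{\omega,k})_p$ via the set $P_k$ in Section~\ref{sec:2}: the first indices of the $X$-variables rewritten inside the finite derivations $\alpha_m \derls w_m$ are precisely the states visited along the corresponding path in the graph with matrix $M$ at "return to level" steps, and the $i_m$ are the states visited at the outer infinite-chain steps. Thus $w \in \tau_{z_0}$ iff there is such a path visiting $\{1,\dots,k\}$ infinitely often iff $\pi_k : z_0 \derlom w$ exists.

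Concretely, the steps in order: (1) Recall/cite the triple construction to get $\{w \mid \ipj \derls w\} = (\Mspe)_{ij}$ and $\sigma_{x_0} = \{w \mid x_0 \derls w\}$, handling the $\epsilon$-bookkeeping (productions $\ipj \to a\mpm[m_2,p,j]$ etc.\ read off \eqref{sys:6}). (2) Show that for each $1 \le i \le n$, the set $\{w \in \Sigma^\omega \mid [i,p] \derlom w\}$ equals $((M^{\omega,k})_p)_i$, by unfolding an infinite leftmost derivation into its outer $P_Z$-chain plus inner finite $P_X$-derivations and matching this, termwise, with the defining sum for $((M^{\omega,k})_p)_i$ given after Corollary~\ref{cor:2.8}, using that over $\B$ a sum is $1$ iff some summand is $1$, i.e.\ iff some path exists. (3) Track the repeated-states condition: verify that the sequence of first components of rewritten $X$-variables interleaved with the $i_m$ from the outer chain is exactly the state sequence of the associated path, so membership in $P_k$ transfers in both directions; this uses that the $P_Z$-productions of the forms $\ip \to a\mp$ (from $\aim$ and from $\cim$) and $\ip \to a\mpm[m_2,p]$ (from $\aimi$) correspond respectively to the push-and-keep, stay, and push-then-return-then-continue moves, i.e.\ exactly the three summands $\Mppsq z$, $\Mpp z$, $\Mppsq xz$ of the $z$-equation. (4) Combine (1)--(3) with $z_0 = I z$ and $x_0 = IyP$ from \eqref{sys:6} to conclude $L(G_k) = \sigma_{x_0} \cup \tau_{z_0}$.

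The main obstacle I expect is step (3): making the interleaving of "first indices rewritten in $\alpha_m \derls w_m$" with the outer-chain indices $i_m$ rigorously match the Büchi condition on paths in the graph with matrix $M$, since the finite derivations $\alpha_m \derls w_m$ themselves hide arbitrarily long (but finite) sub-paths and one must be careful that \emph{only} the "level decreases back to the starting level" vertices and the infinitely-recurring outer vertices are the ones that count toward $P_k$, exactly as in the definition of $M^{\omega,k}$ and the summation identity stated before Lemma~\ref{lem:2.6}. A secondary subtlety is the $\epsilon$-productions: one must ensure that an infinite derivation cannot get "stuck" producing only $\epsilon$'s forever without consuming the required infinitely-many $P_Z$-chain steps, which is handled because each outer chain step is a genuine production application and the $\omega$-power/infinite-product structure of $(\Sr\ll\Sigma^*\gg, \Sr\ll\Sigma^\omega\gg)$ (Theorem 5.5.5 of \'Esik--Kuich) already encodes the correct semantics. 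Everything else — the finite-part equality, the $\N^\infty$ multiplicity bookkeeping, the $k=0$ degenerate case ($\tau_{z_0} = \emptyset$, $P_0 = \emptyset$) — is routine.
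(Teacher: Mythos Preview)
Your concrete steps (1)--(4) are essentially the paper's approach: cite the classical triple construction (the paper invokes Salomaa--Soittola, Theorem~IV.1.2, together with Theorem~\ref{thm:3.5}) for $\sigma_{x_0} = \{w \in \Sigma^* \mid x_0 \derls w\}$, and then argue directly that each production application in an infinite derivation $\pi_k$ corresponds to traversing one edge in the graph of $\Cmc$, so that the sequence of first indices of rewritten variables is exactly the state sequence of the associated path and the $P_k$ condition transfers in both directions.

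Two corrections to your write-up, neither fatal. First, the claim that ``$\tau$ is the least solution of the linear $z$-part'' is false: the $z$-system $z = (\Mppsq + \Mppsq\sigma + \Mpp)z$ is homogeneous, so its least solution over $\B$ is $0$. The solution of order $k$ is \emph{by definition} the specific vector $(M^{\omega,k})_p$, and the proof must match derivations against its explicit defining sum (the display after Corollary~\ref{cor:2.8}), not against a fixpoint approximation; your step~(2) already does this, so the slip in the preamble is harmless but should be removed. Second, your description of which indices ``count toward $P_k$'' is off: in a leftmost derivation every single production application rewrites one variable whose first component is the current state of $\Cmc$, so the full sequence $i_0, i_1^1,\dots,i_1^{t_1}, i_1, i_2^1,\dots$ is the \emph{entire} state sequence of the corresponding infinite path, not merely the ``return to level'' states. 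This is exactly what makes the $P_k$ condition on $\pi_k$ coincide with the $P_k$ condition defining $(M^{\omega,k})_p$, and it is precisely how the paper phrases the argument.
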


\begin{proof}
	By Theorem \rom{4}.1.2 of Salomaa, Soittola \cite{SalSoi} and by Theorem \ref{thm:3.5}, 
	we obtain $\sigma_{x_0} = \{w \in \Sigma^* \mid x_0 \derls w\}$.
	%	Since $(\sigma, \tau)$ is the solution of order $k$ of \eqref{sys:6}, we obtain by Theorems 5.5.9 and 5.6.2 of \'Esik, Kuich \cite{MAT}
	We now show that $\tau_{z_0}$ is generated by the infinite derivations $\derlom$ from $z_0$.
	First observe that the rewriting by the 
	typical $[i,p,j]$- and $[i,p]$- production 
	corresponds to the situation that 
	in the graph of the $\omega$-restricted one 
	counter automaton $\Cmc$ the edge from $(p\rho,i)$ 
	to $(pp\rho,j),(p\rho,j)$ or $(\rho,j)$, $\rho = p^t$ for some $t \geq 0$ is passed after the 
	state $i$ is visited.
	The first step of the infinite derivation 
	$\pi_k$ is given by $z_0 \derl \alpha_0 [i_0,p]$ and indicates that the path in the graph of $\Cmc$ corresponding to $\pi_k$ starts in state $i_0$.
	Furthermore, the sequence of the first elements of variables that are rewritten in $\pi_k$, i.e., $i_0, i_1^1, \dots, i_1^{t_1}, i_1, i_2^2, \dots, i_2^{t_2}, i_2, \dots, i_m, i_{m+1}^1, \dots, i_{m+1}^{t_{m+1}}, i_{m+1}, \dots$ indicates that the path in the graph of $\Cmc$ corresponding to $\pi_k$ visits these states.
	Since this sequence is in $P_k$ the corresponding path 
	contributes to $\bhvr{\Cmc}$. Hence, by Theorem \ref{thm:3.5}
	we obtain
	\begin{equation*}
	\tau_{z_0} = \{ w \in \Sigma^\omega \mid
	\pi: z_0 \derls w\} \, .
	\end{equation*}
\end{proof}

\begin{cor}\label{cor:4.2}
	Assume that, for some $k \in \{0, \dots, n\}$,
	the mixed context free grammar $G_k$ associated to the mixed algebraic system \eqref{sys:6} is constructed from the 
	$\B \langle \Sigma \cup \{\epsilon\} \rangle$-%
	$\omega$-roc automaton $\Cmc$.
	Then 
	\begin{equation*}
	L(G_k) = \bhvr{\Cmc} \, .	
	\end{equation*}
\end{cor}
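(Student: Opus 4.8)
The plan is to deduce Corollary~\ref{cor:4.2} from Theorem~\ref{thm:4.1} by instantiating the generic solution-of-order-$k$ with the particular solution that Theorem~\ref{thm:3.5} provides and then identifying the two components with the two halves of $\bhvr{\Cmc}$. Concretely, fix $k \in \{0,\dots,n\}$ and take $\Sr = \B$. Then $(\Sr\ll\Sigma^*\gg, \Sr\ll\Sigma^\omega\gg)$ is a complete semiring-semimodule pair, isomorphic to $(2^{\Sigma^*}, 2^{\Sigma^\omega})$ as noted before Theorem~\ref{thm:4.1}. By Theorem~\ref{thm:3.5}, the tuple $(\sigma_0, ((M^*)_{p,\epsilon})_{ij}, \tau_0, (M^{\omega,k})_p)$ is a solution of system~\eqref{sys:6}, and by construction $\bhvr{\Cmc} = (\sigma_0, \tau_0)$. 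What remains is to check that this solution \emph{is} the solution of order $k$ in the sense used in Theorem~\ref{thm:4.1}, i.e.\ that it coincides with the tuple $(I(M^*)_{p,\epsilon}P, (M^*)_{p,\epsilon}, I(M^{\omega,k})_p, (M^{\omega,k})_p)$ appearing after Theorem~\ref{thm:3.3}; but this is immediate, since $\sigma_0 = I(M^*)_{p,\epsilon}P$ and $\tau_0 = I(M^{\omega,k})_p$ are exactly the components of $\bhvr{\Cmc}$ by the definition of $\bhvr{\Cmc}$ for an $\Sr'$-$\omega$-roc automaton.

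Having established that, I would apply Theorem~\ref{thm:4.1} verbatim with $(\sigma, \tau)$ this solution of order $k$: it gives $L(G_k) = \sigma_{x_0} \cup \tau_{z_0}$. By the identification just made, $\sigma_{x_0} = \sigma_0 = I(M^*)_{p,\epsilon}P$ and $\tau_{z_0} = \tau_0 = I(M^{\omega,k})_p$, so
\begin{equation*}
L(G_k) \ = \ I(M^*)_{p,\epsilon}P \ \cup \ I(M^{\omega,k})_p \ = \ \bhvr{\Cmc} \, ,
\end{equation*}
where the last equality is the definition of the behavior of $\Cmc$, read through the isomorphism between $\B\ll\Sigma^*\gg \times \B\ll\Sigma^\omega\gg$ and $2^{\Sigma^*}\times 2^{\Sigma^\omega}$ (so that the pair of series is read as the union of a language of finite words and a language of infinite words). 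The only subtlety worth spelling out is bookkeeping about which variable in system~\eqref{sys:6} carries which component: $x_0$ carries the finite-word part $\sigma_0$, $z_0$ carries the infinite-word part $\tau_0$, and the grammar $G_k$ is built so that $x_0$ is its start variable for finite derivations and $z_0$ its start variable for infinite derivations, with repeated variables $\{[i,p]\mid 1\le i\le k\}$ matching the order-$k$ index set $P_k$.

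I expect no real obstacle here: the corollary is essentially a repackaging of Theorem~\ref{thm:4.1} together with the explicit solution from Theorem~\ref{thm:3.5}. The one place that needs a sentence of care — and the closest thing to a ``hard part'' — is making the passage from the semiring formulation (a pair of formal power series over $\B$, respectively $\N^\infty$) to the language formulation ($L(G_k)$ as a subset of $\Sigma^* \cup \Sigma^\omega$) fully explicit, i.e.\ invoking the stated isomorphisms and the fact that over $\B$ the coefficients being $0$ or $1$ lets us read supports as languages; for $\N^\infty$ the same argument applies since membership in the behavior is again a $0/1$ question for the purposes of identifying $L(G_k)$, and the multiplicities do not affect the underlying set. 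Everything else is substitution into results already proved.
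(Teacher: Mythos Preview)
Your proposal is correct and follows exactly the paper's route: the paper's proof reads ``By Theorems \ref{thm:3.5} and \ref{thm:4.1}'', and you have simply unpacked that citation, identifying the solution of order $k$ from Theorem~\ref{thm:3.5} with the one Theorem~\ref{thm:4.1} takes as input and then reading off $L(G_k) = \sigma_{x_0} \cup \tau_{z_0} = \bhvr{\Cmc}$ via the $\B\llss \times \B\llso \cong 2^{\Sigma^*}\times 2^{\Sigma^\omega}$ isomorphism. The closing remarks about $\N^\infty$ are superfluous here since Corollary~\ref{cor:4.2} is stated only for $\B$, but they do no harm.
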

\begin{proof}
	By Theorems \ref{thm:3.5} and \ref{thm:4.1}.
\end{proof}

For the remainder of this section our basic semiring is $\N^\infty$,
which allows us to draw some stronger conclusions.

\begin{thm}
	Assume that $(\sigma, \tau)$ is the 
	solution of order $k$ of the mixed algebraic system
	\eqref{sys:6} over
	$(\N^\infty \llss , \N^\infty \llso)$, 
	$k \in \{0, \dots, n\}$, where $I_{m_1}, P_{m_1}, a_{m_1m_2},
	b_{m_1m_2}, c_{m_1m_2}$, $1 \leq m_1,m_2 \leq n$ are in
	$\{0,1\} \langle \Sigma \cup \{\epsilon\}\rangle$.
	Denote by $d(w)$, for $w \in \Sigma^*$, the number 
	(possibly $\infty$) of distinct finite leftmost 
	derivations of $w$ from $x_0$ with respect to $G_k$;
	and by $c(w)$, for $w \in \Sigma^\omega$, the number 
	(possibly $\infty$) of  distinct infinite leftmost
	derivations $\pi$  of $w$ 
	% with 	$\inv(\pi) \cap \{[l,p] \mid 1 \leq l \leq k \} \neq \emptyset$
	from $z_0$ with respect to $G_k$.
	Then
	\begin{equation*}
	\sigma_{x_0} = \sum_{w \in \Sigma^*} d(w)w \qquad \text{\ and \ } \qquad \tau_{z_0} = \sum_{w \in \Sigma^\omega} c(w)w \, .
	\end{equation*}
\end{thm}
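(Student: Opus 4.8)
The plan is to invoke Theorem~\ref{thm:3.5} together with the counting interpretation of the semiring $\N^\infty$, exploiting the fact that in $\N^\infty$ the coefficient of a word in a proper algebraic (resp.\ linear) series equals the number of derivation trees (resp.\ infinite leftmost derivations) of that word in the associated grammar. Concretely, Theorem~\ref{thm:3.5} tells us that $(\sigma_0, (\Mspe)_{ij}, \tau_0, \Momk_p)$ is the solution of order $k$ of \eqref{sys:6}, so $\sigma_{x_0} = \sigma_0$ and $\tau_{z_0} = \tau_0$. It therefore suffices to show that for each $w \in \Sigma^*$ the coefficient $(\sigma_{x_0}, w)$ equals $d(w)$, and that for each $w \in \Sigma^\omega$ the coefficient $(\tau_{z_0}, w)$ equals $c(w)$.

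First I would handle the finite part. The subsystem of \eqref{sys:6} for the variables $x_0$ and $\ipj$ is an $\N^\infty$-algebraic system, and the hypothesis that the coefficients $I_{m_1}, P_{m_1}, \aim, \bij, \cim$ lie in $\{0,1\}\langle \Sigma \cup \{\epsilon\}\rangle$ means precisely that each production of $P_X$ is taken with weight $1$ and that distinct productions are never ``identified''. Hence the grammar $G_k$ restricted to $P_X$ is the context-free grammar whose formal-power-series semantics over $\N^\infty$ assigns to $w$ the number of distinct leftmost derivations $x_0 \derls w$; this is the standard ambiguity-counting fact for $\N^\infty$-weighted context-free grammars (cf.\ Kuich, Salomaa \cite{88} and \'Esik, Kuich \cite{MAT}). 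So $\sigma_{x_0} = \sum_{w \in \Sigma^*} d(w)\,w$.

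For the infinite part I would argue analogously, but using the mixed (infinitary) semantics. The subsystem for $z_0$ and $\ip$ is an $\N^\infty$-linear system over $\N^\infty \llso$ whose coefficients are again drawn from $\{0,1\}$-series and whose solution component $\tau_{z_0}$ is, by the general theory of mixed algebraic systems over complete semiring-semimodule pairs (\'Esik, Kuich \cite[Section~5.6]{MAT}), obtained as an infinite sum over all infinite leftmost derivations $\pi_k : z_0 \derlom w$, each contributing the product of the weights of the finitely many productions used at each finite stage together with the weights of the finite leftmost sub-derivations $\alpha_m \derls w_m$. Since every factor in such a product is $1$, each infinite derivation contributes exactly $1$, and the infinite sum over all of them yields $\sum_{w \in \Sigma^\omega} c(w)\,w$; one must observe here that the bookkeeping of the repeated-state condition $\cdots \in P_k$ in the definition of $\derlom$ matches exactly the order-$k$ restriction in the solution, which is already built into Theorem~\ref{thm:3.5}. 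Combining the two parts gives the claimed equalities.

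The main obstacle is making the counting argument for the infinite part fully rigorous: one has to be careful that ``distinct infinite leftmost derivations'' are in bijection with the index tuples over which the infinite-product/infinite-sum in the semiring-semimodule semantics ranges, and in particular that the decomposition of an infinite derivation into its stages $\alpha_m \derls w_m$ and its linking $\ip$-productions is canonical -- so that no derivation is counted twice and none is missed. This is where the precise definition of $\pi_k$ via the sequence of first components of rewritten variables, and the matching summation identity used in Lemma~\ref{lem:2.6}, do the real work; everything else is a routine transcription of the $\N^\infty$-ambiguity correspondence from the finite to the $\omega$-setting.
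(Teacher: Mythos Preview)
Your proposal is correct and follows essentially the same approach as the paper: invoke Theorem~\ref{thm:3.5} to identify the order-$k$ solution, then appeal to the standard $\N^\infty$-ambiguity correspondence for the finite part and its $\omega$-analogue for the infinite part. The paper's own proof is a one-line citation of Theorem~\rom{4}.1.5 of Salomaa--Soittola~\cite{SalSoi} (rather than Kuich--Salomaa~\cite{88}) for the finite counting and Theorems~5.5.9 and~5.6.3 of \'Esik--Kuich~\cite{MAT} for the infinite counting, together with Theorem~\ref{thm:3.5}; your longer discussion simply unpacks what those references supply.
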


\begin{proof}
	By Theorem \rom{4}.1.5 of Salomaa, Soittola \cite{SalSoi}, Theorems 
	5.5.9 and 5.6.3 of \'Esik, Kuich \cite{MAT} and Theorem \ref{thm:3.5}.
\end{proof}

In the forthcoming Corollary \ref{cor:4.4} we consider,
for a given $\{0,1\}\langle \Sigma \cup \{\epsilon \}\rangle$-$\omega$-roc automaton 
$\Cmc = (n, I, M, P, k)$ the number of distinct computations from an initial instantaneous description $(i,w,p)$ for $w \in \Sigma^*$, $I_i \neq 0$, to an
accepting instantaneous description $(j, \epsilon, \epsilon)$, 
with $P_j \neq 0$, $i,j \in \{0, \dots, n\}$.

Here $(i,w,p)$ means that $\Cmc$ starts in the initial state $i$ with $w$ on its input tape and $p$ on its
pushdown tape;
and $(j,\epsilon,\epsilon)$ means that $\Cmc$ has entered the final state $j$ with empty input tape and empty pushdown tape.

Furthermore, we consider the number of distinct infinite computations starting in an initial instantaneous description 
$(i,w,p)$ for $w \in \Sigma^\infty$, $I_i \neq 0$.

\begin{cor}\label{cor:4.4}
	Assume that, for some $k \in \{0, \dots, n\}$,
	the mixed context-free grammar $G_k$ associated to the mixed algebraic system \eqref{sys:6} is constructed from the $\{0,1\}\langle \Sigma \cup \{\epsilon \}\rangle$-$\omega$-roc automaton $\Cmc$.
	Then the number (possibly $\infty$) of distinct finite leftmost derivations of $w \in \Sigma^*$ from $x_0$
	equals the number of distinct finite computations from 
	an initial instantaneous description for $w$ to an accepting instantaneous description;
	moreover, the number  (possibly $\infty$) of distinct infinite (leftmost) derivations of $w \in \Sigma^\omega$ from $z_0$ equals the number of distinct
	infinite computations starting in an initial instantaneous description for $w$.
\end{cor}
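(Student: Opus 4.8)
The plan is to obtain the corollary by stringing together two translations we already have and then reading off multiplicities. First, by the theorem just above (which expresses $\sigma_{x_0}$ and $\tau_{z_0}$ through $d(w)$ and $c(w)$): if $(\sigma,\tau)$ denotes the solution of order $k$ of the mixed algebraic system \eqref{sys:6} over $(\N^\infty\llss,\N^\infty\llso)$, then for $w\in\Sigma^*$ the number $d(w)$ of finite leftmost derivations of $w$ from $x_0$ equals the coefficient of $w$ in $\sigma_{x_0}$, and for $w\in\Sigma^\omega$ the number $c(w)$ of infinite leftmost derivations of $w$ from $z_0$ equals the coefficient of $w$ in $\tau_{z_0}$. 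Second, by Theorem \ref{thm:3.5} together with the definition of the solution of order $k$ given right after Theorem \ref{thm:3.4}, $\sigma_{x_0}=I\Mspe P$ and $\tau_{z_0}=I\Momk_p$, i.e. $(\sigma_{x_0},\tau_{z_0})=\bhvr{\Cmc}$. So the whole statement reduces to one counting assertion, to be proved over $\N^\infty\langle\Sigma\cup\{\epsilon\}\rangle$ under the hypothesis that all entries of $M$ and of $I,P$ lie in $\{0,1\}\langle\Sigma\cup\{\epsilon\}\rangle$: the coefficient of $w\in\Sigma^*$ in $I\Mspe P$ equals the number of distinct finite computations of $\Cmc$ from an initial instantaneous description for $w$ to an accepting instantaneous description, and the coefficient of $w\in\Sigma^\omega$ in $I\Momk_p$ equals the number of distinct infinite computations of $\Cmc$ that start in an initial instantaneous description for $w$ and visit $\{1,\dots,k\}$ infinitely often.

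To prove that counting assertion I would unwind the matrix expressions into sums over computations. For the finite part, write $\Mspe=\sum_{m\ge1}(M^m)_{p,\epsilon}$ (the $m=0$ term vanishes since $p\neq\epsilon$), expand each $(M^m)_{p,\epsilon}$ as a sum over the intermediate pushdown contents, and then, using $M_{\pi,\pi'}=\sum_{a}(\cdots)\,a$, as a sum over the intermediate states and over the letter (or $\epsilon$) read at each of the $m$ steps. A single fully expanded summand is exactly a length-$m$ move sequence carrying the instantaneous description $(i,w,p)$ step by step to $(j,\epsilon,\epsilon)$ — that is, a finite computation of $\Cmc$ — and under the $\{0,1\}$-hypothesis it contributes the bare monomial $1\cdot w$, where $w$ is the concatenation of the letters read. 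Since addition in $\N^\infty$ satisfies $m_1 u + m_2 u = (m_1+m_2)u$ and a countable sum of copies of $u$ is $\infty\cdot u$, the coefficient of $w$ in $\Mspe$ is the number, possibly $\infty$, of such computations; left-multiplying by the $\{0,1\}$-row $I$ and right-multiplying by the $\{0,1\}$-column $P$ merely selects the $i$ with $I_i\neq0$ and the $j$ with $P_j\neq0$, still only adding multiplicities. The infinite part runs the same way, starting from the explicit formula for $(\Momk_p)_j$ displayed just before Lemma \ref{lem:2.6}: each of its summands is an infinite move sequence from $(p,j)$ whose state sequence lies in $P_k$, that is, an infinite computation of $\Cmc$ visiting $\{1,\dots,k\}$ infinitely often, so the coefficient of $w$ counts exactly these, and left-multiplication by $I$ again restricts to $i$ with $I_i\neq0$. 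Feeding this back through the first paragraph identifies $d(w)$ and $c(w)$ with the asserted computation counts.

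The only real work lies in the bookkeeping of that last identification; there is no deep obstacle, but several points deserve care. One must check that distinct computations give distinct summands in the fully expanded sums, so that nothing is accidentally merged; that it is precisely the $\{0,1\}$-valuedness of the entries that forces every summand to be a bare monomial $1\cdot w$, so that the coefficients literally count and are never scaled down; and that $\epsilon$-moves are handled correctly — a computation carries its $\epsilon$-moves among its steps, so summing over all lengths $m$ collects them all, and in particular $d(\epsilon)$, or some $c(w)$, can legitimately be $\infty$ when $\Cmc$ admits an $\epsilon$-cycle, matching the $\infty$ that then appears in the corresponding coefficient on the automaton side. Finally one should note, as is already implicit in the proof of Theorem \ref{thm:4.1}, that the membership condition in $P_k$ imposed on $\pi_k$ is exactly the B\"uchi condition with repeated states $\{1,\dots,k\}$; this is what makes the order $k$ of the grammar $G_k$ and the repeated states of $\Cmc$ line up, so that $c(w)$ counts precisely the accepting infinite computations.
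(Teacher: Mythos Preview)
Your argument is correct. It differs in route from the paper's own proof, which is a one-line appeal to an external bijection: the paper invokes Corollary~3.4.12 of \'Esik--Kuich \cite{MAT} (the classical fact that the triple construction sets up a bijection between finite leftmost derivations and finite accepting computations) and then says the infinite case is immediate from how the infinite derivations $\pi_k$ were defined. In other words, the paper matches derivations to computations \emph{directly}, without passing through coefficients at all.

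You instead go through the numerics: first the preceding theorem turns derivation counts into coefficients of $\sigma_{x_0}$ and $\tau_{z_0}$; then Theorem~\ref{thm:3.5} identifies these with $I\Mspe P$ and $I\Momk_p$; finally you unwind those matrix expressions into sums over computations and read off multiplicities. This is a genuinely different decomposition. What it buys you is self-containment: you do not need the cited external bijection, only the expansion formulas already displayed in Sections~\ref{sec:2}--\ref{sec:3} and the $\{0,1\}$-valuedness of the data. What the paper's route buys is brevity and a sharper structural statement (an actual bijection, not just equal cardinalities), at the cost of depending on a result proved elsewhere. Your cautionary remarks about distinct computations giving distinct summands, about the $\{0,1\}$ hypothesis forcing bare monomials, and about the $P_k$ condition matching the B\"uchi condition are exactly the points one has to check, and they go through without difficulty.
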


\begin{proof}
	By Corollary 3.4.12 of \'Esik, Kuich \cite[Theorem 4.3]{MAT} and the definition of infinite derivations with respect to $G_k$.
\end{proof}

The context-free grammar $G_k$ associated to \eqref{sys:6}
is called \emph{unambiguous} if each $w \in L(G)$,
$w \in \Sigma^*$ has a unique finite leftmost derivation
and each $w \in L(G)$, $w \in \Sigma^\omega$,
has a unique infinite (leftmost) derivation.

An $\Nlse$-$\omega$-roc automaton $\Cmc$ is called
\emph{unambiguous} if $(\bhvr{\Cmc},w) \in \{0,1\}$ for each $w \in \Sigma^* \cup \Sigma^\omega$.

\begin{cor}
	Assume that, for some $k \in \{0, \dots, n\}$,
	the mixed context-free grammar $G_k$ associated to the mixed algebraic system \eqref{sys:6} is constructed from the  $\{0,1\}\langle \Sigma \cup \{\epsilon \}\rangle$-$\omega$-roc automaton $\Cmc$.
	Then $G_k$ is unambiguous iff $\bhvr{\Cmc}$
	is unambiguous. 	
\end{cor}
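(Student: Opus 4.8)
The plan is to combine the two directions already packaged in the earlier results: the numerical refinement of Corollary \ref{cor:4.4} together with the defining property of unambiguity on the automaton side. The statement asserts an equivalence, so I would prove the two implications separately, but both run through the same bridge, namely that for every $w\in\Sigma^*$ the number $d(w)$ of distinct finite leftmost derivations of $w$ from $x_0$ in $G_k$ equals the number of distinct finite accepting computations of $\Cmc$ on $w$, and likewise $c(w)$ for $w\in\Sigma^\omega$ equals the number of distinct infinite computations of $\Cmc$ starting from an initial instantaneous description for $w$. This is exactly Corollary \ref{cor:4.4}, so I may quote it directly.

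First I would unfold the definitions. Recall $\Cmc$ is a $\{0,1\}\langle\Sigma\cup\{\epsilon\}\rangle$-$\omega$-roc automaton, so every entry of $I$, $P$, $\Mppsq$, $\Mpp$, $\Mpe$ lies in $\{0,1\}\langle\Sigma\cup\{\epsilon\}\rangle$; hence for each $w$ the coefficient $(\bhvr{\Cmc},w)$ counts (possibly with multiplicity $\infty$) the distinct accepting or infinite computations on $w$, and by definition $\Cmc$ is unambiguous precisely when this coefficient is $0$ or $1$ for every $w\in\Sigma^*\cup\Sigma^\omega$. On the grammar side, $G_k$ is unambiguous precisely when every $w\in L(G_k)\cap\Sigma^*$ has a unique finite leftmost derivation from $x_0$ and every $w\in L(G_k)\cap\Sigma^\omega$ has a unique infinite leftmost derivation from $z_0$; equivalently, $d(w)\le 1$ for all $w\in\Sigma^*$ and $c(w)\le 1$ for all $w\in\Sigma^\omega$ (with membership in $L(G_k)$ being the same as $d(w)\ge 1$, resp. $c(w)\ge 1$, by the description of $L(G_k)$).

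Next I would string these together. By Corollary \ref{cor:4.4}, $d(w)$ equals the number of distinct finite computations from an initial instantaneous description for $w$ to an accepting one, i.e.\ $d(w)=(\sigma_{x_0},w)$ where $(\sigma,\tau)$ is the solution of order $k$, and by Corollary \ref{cor:4.2} (via the identification $\bhvr{\Cmc}=(\sigma_{x_0},\tau_{z_0})$ in the $\N^\infty$ setting, cf. Theorem \ref{thm:3.5}) this is $(\bhvr{\Cmc},w)$ restricted to $w\in\Sigma^*$; symmetrically $c(w)=(\bhvr{\Cmc},w)$ for $w\in\Sigma^\omega$. Therefore $d(w)\le 1$ for all $w\in\Sigma^*$ and $c(w)\le 1$ for all $w\in\Sigma^\omega$ holds if and only if $(\bhvr{\Cmc},w)\in\{0,1\}$ for all $w\in\Sigma^*\cup\Sigma^\omega$, which is the claimed equivalence.

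I expect the only genuinely delicate point to be the bookkeeping of \emph{leftmost} derivations versus arbitrary derivations: the counts $d(w)$ and $c(w)$ in Corollary \ref{cor:4.4} and in the preceding theorem are explicitly for leftmost derivations, and the notion of unambiguity for $G_k$ must be phrased in those same terms — which it is, by the definition given just before the corollary. So the main obstacle is really just making sure that the $\omega$-case bijection between infinite leftmost derivations $\pi_k:z_0\derlom w$ and infinite computations of $\Cmc$ on $w$ respects the repeated-state/$P_k$ side condition on both sides; this is precisely what was verified in the proof of Theorem \ref{thm:4.1} and reused in Corollary \ref{cor:4.4}, so no new argument is needed, only a careful citation. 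With that in place the proof is a short chain of equivalences, and I would close by noting it follows from Corollaries \ref{cor:4.2} and \ref{cor:4.4} together with the definitions of unambiguity for $G_k$ and for $\Cmc$.
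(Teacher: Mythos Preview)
Your proposal is correct and matches the paper's approach: the paper states this corollary without proof, placing it immediately after the definitions of unambiguity for $G_k$ and for $\Cmc$ and after Corollary~\ref{cor:4.4}, so it is meant to follow directly from Corollary~\ref{cor:4.4} together with those definitions, exactly as you argue. Your extra care about leftmost derivations and the $P_k$ side condition is appropriate but not something the paper spells out.
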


In the forthcoming paper Droste, \'Esik, Kuich \cite{UP} we extend the results of this paper to weighted $\omega$-pushdown automata and obtain the triple-pair construction for them. In the classical theory this triple-pair constructions extends the well-known triple construction that, given an $\omega$-pushdown automaton, yields an equivalent context-free grammar.

\section*{Acknowledgment}
\noindent	The ideas of and personal discussions with Zolt\'an \'Esik were of great influence
	in preparing this paper. Thanks are due to two unknown referees for their helpful remarks.


\begin{thebibliography}{99}
	
	\bibitem{4}
	Berstel, J.: Transductions and Context-Free Languages. Teubner, 1979.
	
	\bibitem{10}
	Bloom, S.~L., \'Esik, Z.:  Iteration Theories. 
	EATCS Monographs on Theoretical Computer Science. Springer, 1993.
	
	\bibitem{17}
	Bucher, W., Maurer, H.: Theoretische Grundlagen der Programmiersprachen.
	B. I. Wissenschaftsverlag, 1984.
	
	\bibitem{25}
	Conway, J.~H.:  Regular Algebra and Finite Machines. Chapman \& Hall,
	1971.
	
	\bibitem{UP}
	Droste, M., \'Esik, Z., Kuich, W.: The triple-pair construction for weighted $\omega$-pushdown automata. In: Automata and Formal Languages (AFL 2017), EPTCS (2017) 101-113.
	
	\bibitem{29} Eilenberg, S.: Automata, Languages and Machines. Vol. A. Academic
	Press, 1974.
	
	\bibitem{42}
	\'Esik, Z., Kuich, W.:   A semiring-semimodule generalization of $\omega$-context-free languages. In: Theory is Forever (Eds.: J. Karhum{\"a}ki, H. Maurer, G. Paun, G. Rozenberg), LNCS 3113, Springer, 2004, 68--80.
	
	\bibitem{43}
	\'Esik,  Z.,  Kuich,  W.:  A  semiring-semimodule  generalization  of
	$\omega$-regular languages II. 
	Journal of Automata, Languages and Combinatorics 10 (2005) 243--264.
	
	\bibitem{44}
	\'Esik, Z., Kuich, W.: On iteration semiring-semimodule pairs. 
	Semigroup Forum 75 (2007), 129--159.
	
	\bibitem{45}
	\'Esik, Z., Kuich, W.:  A semiring-semimodule generalization of transducers
	and abstract $\omega$-families of power series. 
	Journal of Automata, Languages and Combinatorics,
	12 (2007), 435--454.
	
	\bibitem{MAT}
	\'Esik, Z., Kuich, W.: Modern Automata Theory.
	\url{http://www.dmg.tuwien.ac.at/kuich}
	
	\bibitem{EK}
	\'Esik, Z., Kuich, W.: Continuous semiring-semimodule pairs and mixed algebraic systems.
	Acta Cybernetica 252 (2017) 43-59.
	
	\bibitem{JACM16}
	Greibach S.~A.: An infinite hierarchy of context-free languages. Journal of the ACM 16 (1969) 91--106.
	
	\bibitem{63}
	Harrison, M. A.: Introduction to Formal Language Theory. Addison-Wesley, 1978.
	
	
	\bibitem{78}
	Kuich, W.:  Semirings and formal power series:  
	Their relevance to formal languages and automata theory. 
	In:  Handbook of Formal Languages (Eds.: G. Rozenberg and A. Salomaa), Springer, 1997, Vol. 1, Chapter 9, 609--677.
	
	\bibitem{88}
	Kuich, W., Salomaa, A.:  Semirings, Automata, Languages. 
	EATCS Monographs on Theoretical Computer Science, Vol. 5. Springer, 1986.
	
	\bibitem{PerPin}
	Perrin, D., Pin, J. - E.: Infinite Words – Automata, Semigroups, Logic and Games, Elsevier, 2004.
	
	\bibitem{SalSoi}
	Salomaa, A., Soittola, M.: Automata - Theoretic Aspects of Formal Power Series, Springer, 1978.
	
\end{thebibliography}
\end{document}